\documentclass[9pt,numbers]{sigplanconf}

\usepackage{amsmath}
\usepackage{amssymb}
\usepackage{mathrsfs}
\usepackage{graphicx}
\usepackage{palatino}
\usepackage{mathtools}
\usepackage{amsthm}

\usepackage{caption}
\DeclareCaptionType{copyrightbox}
\usepackage{subcaption}
\usepackage[figure, linesnumbered]{algorithm2e}

\usepackage{tikz}
\usetikzlibrary{arrows,automata}
\usetikzlibrary{decorations.pathmorphing}
\usetikzlibrary{decorations.pathreplacing}
\usetikzlibrary{fadings}
\usetikzlibrary{patterns}
\tikzset{
    state/.style={
           rectangle,
           rounded corners,
           draw=black,
           minimum height=2em,
           inner sep=2pt,
           text centered,
           },
}


\newtheorem{proposition}{Proposition}

\newtheorem{definition}{Definition}

\DeclareMathOperator{\lin}{lin}

\SetKwBlock{SubAlgoBlock}{}{end}
\newcommand{\SubAlgo}[2]{#1 \SubAlgoBlock{#2}}

\definecolor{gris}{RGB}{180,180,200}
\definecolor{allEvtColor}{RGB}{70,90,230}
\colorlet{allEvtColorFill} {allEvtColor!20!white}
\definecolor{writeColor}{RGB}{240,90,70}
\definecolor{writeColorFill}{RGB}{255,230,200}

\copyrightyear{2016}
\conferenceinfo{PPoPP '16}{March 12-16, 2016, Barcelona, Spain}
\copyrightdata{978-1-4503-4092-2/16/03}
\reprintprice{\$15.00}

\copyrightdoi{2851141.2851170} 

\publicationrights{licensed}     

\begin{document}


\title{Causal Consistency: Beyond Memory}

\authorinfo{Matthieu Perrin \and Achour Mostefaoui \and Claude Jard}
           {LINA -- University of Nantes, Nantes, France}
           {[firstname.lastname]@univ-nantes.fr}

\maketitle

\begin{abstract}
In distributed systems where strong consistency is costly when not impossible, causal consistency provides a valuable abstraction to represent program executions as partial orders. In addition to the sequential program order of each computing entity, causal order also contains the semantic links between the events that affect the shared objects -- messages emission and reception in a communication channel, reads and writes on a shared register. Usual approaches based on semantic links are very difficult to adapt to other data types such as queues or counters because they require a specific analysis of causal dependencies for each data type. This paper presents a new approach to define causal consistency for any abstract data type based on sequential specifications. It explores, formalizes and studies the differences between three variations of causal consistency and highlights them in the light of PRAM, eventual consistency and sequential consistency: weak causal consistency, that captures the notion of causality preservation when focusing on convergence; causal convergence that mixes weak causal consistency and convergence; and causal consistency, that coincides with causal memory when applied to shared memory.
\end{abstract}

\category{E.1}{data structures}{distributed data structures}

\keywords
  Causal consistency, Consistency criteria,
  Pipelined consistency, Sequential consistency, Shared objects, 
  Weak causal consistency.

\section{Introduction}

\paragraph{Overview.}
Distributed systems are often viewed as more difficult to program than sequential systems 
because they require to solve many issues related to communication. Shared objects, that can be
accessed concurrently by the processes of the system, can be used as a practical abstraction
of communication to let processes enjoy a more general view of the system and tend to meet the classical paradigm of parallel programming. A precise specification of these objects is essential to ensure their adoption as well as the reliability issues of distributed systems. The same reasoning also holds for parallel and multicore processors \cite{BA08,SVNJS11}.

Many models have been proposed to specify shared memory \cite{adve1996shared}.
Linearizability \cite{herlihy1990linearizability} and sequential consistency \cite{lamport1979make} 
guarantee that all the operations appear totally ordered, and that this order is compatible with
the \emph{program order}, the order in which each process performs its own operations.
These strong consistency criteria are very expensive to implement in message-passing systems.
In terms of time, the duration of either the read or the write operations has to be linear with the latency of the network for sequential consistency \cite{lipton1988pram} and for all kind of operations in the case of linearizability \cite{attiya1994sequential}.
Concerning fault-tolerance, strong hypotheses must be respected by the system: 
it is impossible to resist partitioning (CAP Theorem) \cite{gilbert2002brewer}.

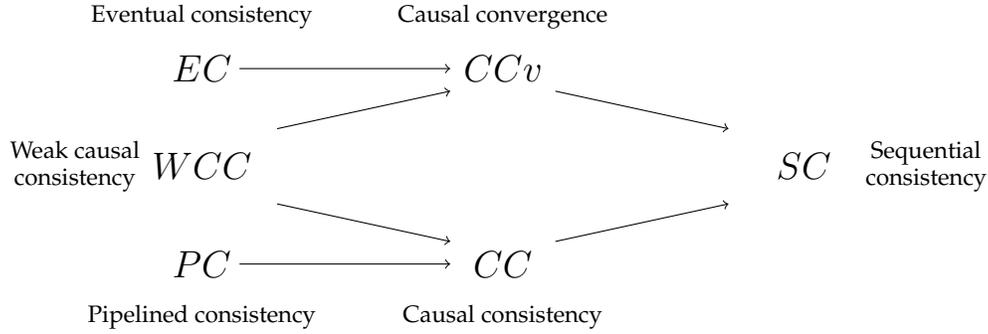
\begin{figure*}
  \centering
    \begin{tikzpicture}
      \draw     (3,3.3)  node{\Large $EC$} ;
      \draw     (3,4)  node{Eventual consistency} ;

      \draw     (7,3.3) node{\Large $CCv$} ;
      \draw     (7,4) node{Causal convergence} ;
      
      \draw     (2.5,2)  node[left]{\begin{tabular}{c}Weak causal \\consistency\end{tabular}} ;
      \draw     (3,2)  node[]{\Large $WCC$} ;
      \draw     (11,2)    node{\Large $SC$} ;
      \draw     (11.5,2)    node[right]{\begin{tabular}{c}Sequential \\consistency\end{tabular}} ;
      
      \draw     (3,0.7)  node{\Large $PC$} ;
      \draw     (3,0)  node{Pipelined consistency} ;

      \draw     (7,0.7)  node{\Large $CC$} ;
      \draw     (7,0)  node{Causal consistency} ;
      
      \draw[->] (3.5,3.3)      -- (6.3,3.3) ;   
      \draw[->] (4,2.5)  -- (6.3,3) ; 
      \draw[->] (7.7,3) -- (10,2.5) ;  

      \draw[->] (3.5,0.7)      -- (6.3,0.7) ; 
      \draw[->] (4,1.5)    -- (6.3,1) ;     
      \draw[->] (7.7,1)    -- (10,1.5) ;     
    \end{tikzpicture}
  \caption{Relative strength of causality criteria}
  \label{fig:sum_up_criteria}
\end{figure*}

In order to gain in efficiency, researchers explored weak consistency criteria, especially for parallel machine in the 
nineties like PRAM \cite{lipton1988pram} and causal memory \cite{ahamad1995causal} that are the best documented. 
PRAM ensures that each process individually sees a consistent local history, that respects the order in which the 
other processes performed their own writes on a register (small piece of memory). Recently, we have seen a resurgence of interest for this topic 
due to the development of multicore processors \cite{H09} and cloud computing (large modern distributed systems such 
as Amazon's cloud, data centers \cite{vogels2008eventually}) from the one side and the necessity to circumvent the CAP 
impossibility result and to ensure high efficiency.
Among recent works, CRDT distributed data types \cite{SPBZ11} and a parallel specification of eventual consistency \cite{BGYZ14}. 
Eventual consistency ensures that all the processes will eventually reach a common state when they stop writing.

{\em Causality} models a distributed execution as a partial order. It was first defined for message-passing
systems, on top of Lamport's \emph{happens-before} relation \cite{lamport1978time}, 
a partial order that contains the sequential program order of each process, and in which a message
emission happens-before its reception. Causal reception \cite{birman1987reliable,raynal1991causal}
ensures that, if the emission of two messages sent to the same process are related by the
happens-before relation, their reception will happen in the same order.
In shared memory models, the exchange of information is ensured by
shared registers where processes write and read information. 
Causal memory \cite{ahamad1995causal} aims at building a causal order, 
that also contains the sequential program order of the processes, 
but in which the emission and reception of messages are replaced by
read and write operations on shared registers. 
The guarantees that are ensured by causal consistency have been identified as
four \emph{session guarantees} by Terry et. al in \cite{terry1994session}. 
\emph{Read your writes} ensures that a read cannot return a value older than a
value written earlier by the same process. 
\emph{Monotonic writes} ensures that, if a process writes into two different registers, 
and another process reads the second write, then this process must read a value at 
least as recent as the first write in the other register.
\emph{Monotonic reads} ensures that, if a process writes twice the same registers, 
and another process reads the second writes, then this process can never read the 
first value again in the future.
Finally, \emph{writes follow reads} ensures that, if a process reads a value in a register
and then writes a value in another register, another process cannot read the lastly written
value and a value older than the value read by the first process. 

\paragraph{Motivation.}

On the one hand, strong consistency criteria (linearizability and sequential consistency) are 
costly both in time and space and turn out to be impossible to implement in some systems. On the other 
hand, there are many weak consistency criteria (e.g. eventual consistency, PRAM consistency, causal 
consistency) that can be implemented in any distributed system where communication is possible 
and with a time complexity that does not depend on communication delays. The natural question is 
then "what is the strongest consistency criterion" that enjoys such property? Unfortunately, 
it has been proven in \cite{perrin2015update} that when wait-free distributed systems are 
considered (all but one process may crash), PRAM consistency (or, \textit{a fortiori}, causal 
consistency) and eventual consistency cannot be provided together.
This means that there are at least two separate branches in the hierarchy of weak consistency criteria: 
one that contains PRAM and causal consistency and the other that contains eventual consistency. As a 
consequence, a question arises "what are the causality properties that can be provided together with 
eventual consistency?". 

Causal consistency, as is known up till now, is only defined for memory (registers) and it has been defined 
assuming it is implemented using messages. Indeed, memory is a good abstraction in sequential programming 
models. Things are more complicated for distributed computing because of race conditions: complex 
concurrent editing can often lead to inconsistent states. Critical sections and transactions offer 
generic solutions to this problem, but at a high cost: they reduce parallelism and fault-tolerance and 
may lead to deadlocks. Another approach is to design the shared objects directly, without using shared 
memory. For example, in the context of collaborative editing, the CCI model \cite{sun1998achieving} 
requires convergence, causality and intention preservation. In this model, causality preservation is 
weaker than causal consistency as defined for memory and can be provided together with eventual consistency.

The definition of causal memory is based on a semantic matching between the reads and the writes. 
In particular, a read operation depends only on the last write operation. For other abstract data 
types (e.g. graphs, counters or queues) the value returned by a query does not depend on one particular 
update, but on all or part of the updates that happened before it. Moreover, for a queue, the order 
of these updates is important. It would be interesting to define causal consistency for any object 
having a sequential specification independently from any implementation mechanism.

\paragraph{Contributions of the paper.}
As said above, this paper aims at extending the definition of causal consistency to all abstract data types. 
To this end, it introduces a clear distinction between two facets that are necessary
to fully specify shared objects: a sequential specification using state transition automata, and a consistency criterion,
that defines a link between (distributed) histories and sequential specifications. 
Its main contribution is the formal definition of three variations of causal 
consistency that can be provided in any distributed system independently
from communication delays (network latency) meaning that an operation returns without waiting any contribution 
from other processes. These criteria complete and help to better understand the map (Fig. \ref{fig:sum_up_criteria}) 
that gives an overview of the relative strength of consistency
criteria.
An arrow from a criterion $C_1$ to a criterion $C_2$ in Fig. \ref{fig:sum_up_criteria} denotes the fact that $C_2$ is stronger that $C_1$.
\begin{itemize}
  \item Weak causal consistency. It can be seen as the causal common denominator between the two branches 
    and can be associated with any weak consistency criteria to form a new criterion that can be implemented 
    in a wait-free distributed system.
  \item Causal convergence. It is the combination of eventual consistency and weak causal consistency. 
    This can be a candidate to replace eventual consistency.
  \item Causal consistency. When applied to registers, it matches the definition of causal memory. 
\end{itemize}

In order to illustrate the notions presented in this paper, a data structure called window stream of size $k$ is introduced. 
This data structure allows to capture the diversity of data structures thanks to the parameter $k$.

The remainder of this paper is organized as follows. Section \ref{section:formalization}
presents a formalization of abstract data types as well as the notion of consistency criteria.
Section \ref{section:weak_causal_consistency} defines and illustrates weak causal consistency.
Section \ref{sec:cc} defines and illustrates causal consistency and compares it to causal memory.
Section \ref{sec:ccv} defines and illustrates causal convergence.
Section \ref{section:implementation} discusses the implementation of causal consistency and causal 
convergence in asynchronous message-passing distributed systems where crashes may occur.
Finally, Section \ref{section:conclusion} concludes the paper.

\section{Specifying shared objects}\label{section:formalization}
\vspace{2mm}

Shared objects can be specified by two complementary facets:
an \emph{abstract data type} that has a sequential specification, defined in this 
paper by a transition system that characterizes the sequential histories allowed 
for this object and a \emph{consistency criterion}
that makes the link between the sequential specifications and the
distributed executions that invoke them, by a characterization of the histories that
are admissible for a program that uses the objects, depending on their type.
As shared objects are implemented in a distributed system, typically using replication, 
the events in a distributed history are partially ordered. 

\vspace{1mm}
\subsection{Abstract data types}
\vspace{1mm}

To our knowledge, the only attempt to define weakly consistent objects for arbitrary abstract data types is based on 
parallel specifications \cite{BGYZ14}, in which the state accessed by a process at each operation is defined by a 
function on the operations in its past, ordered by a \emph{visibility} and an \emph{arbitration} relations.
The first limit of this approach is that parallel specifications are, by design, only suitable to express 
strong eventually consistent objects, in which two processes must see the same state as soon as they have received 
the same updates. The second limit is that parallel specifications require to specify a state for any possible 
partial order of events, which leads to specifications as complex as the programs they specify.
Consequently, they are non-intuitive and error-prone
as they cannot rely on the well-studied and understood notions of abstract states and transitions.

We use transition systems to specify sequential abstract data types. 
We model abstract data types as transducers, very close to Mealy machines 
\cite{mealy1955method}, except that we do not restrict the analysis
to finite state systems. The input alphabet $\Sigma_i$ consists of
the set of the methods available on the data type. Each method can have two effects. 
On the one hand, they can have a side effect that usually affects all processes. In the transition system, 
it corresponds to a transition between abstract states formalized by the transition function $\delta$. 
On the other hand, they can return a value from the output alphabet $\Sigma_o$ depending on the abstract state
and the output function $\lambda$. Both the transition and the output functions must be total, as shared objects
evolve according to external calls to their operations, to which they must respond in all circumstances.
For example, the \emph{pop} method from a stack deletes the
head of the stack (the side effect) and returns its value (the output). 
More formally, abstract data types correspond to Def. \ref{def:adt}.

\vspace{1mm}
\begin{definition}\label{def:adt}
  An \emph{abstract data type} (ADT) is a 6-tuple \linebreak $T = (\Sigma_i, \Sigma_o, Q, q_0, \delta, \lambda)$ such that:
  \begin{itemize}
  \item\vspace{2mm}  $\Sigma_i$ and $\Sigma_o$ are countable sets called \emph{input} and \emph{output} alphabets. 
    An \emph{operation} is an element $(\sigma_i, \sigma_o)$ of  $\Sigma_i \times \Sigma_o$, denoted by $\sigma_i/\sigma_o$;
  \item\vspace{2mm}  $Q$ is a countable set of \emph{states} and $q_0 \in Q$ is the \emph{initial state};
  \item\vspace{2mm}  $\delta :  Q\times \Sigma_i\rightarrow Q$ and $\lambda : Q\times \Sigma_i\rightarrow \Sigma_o$ are the \emph{transition} and \emph{output} functions.
  \end{itemize}
\end{definition}
\vspace{1mm}

We distinguish two kinds of operations depending on their behavior: updates and queries. 
An input $\sigma_i$ is an \emph{update} if the transition part is not always a loop, i.e. 
there is a state $q$ such that $\delta(q,\sigma_i)\neq q$.
It is a \emph{query} if the output depends on the state, i.e. there are two states $q$ and $q'$
such that $\lambda(q,\sigma_i)\neq\lambda(q',\sigma_i)$.
Some operations are both update and query. For example, the pop operation in a stack 
deletes the first element (the update part) and returns its value (the query part).
An operation that is not an update (resp. query) is called a \emph{pure query} 
(resp. \emph{pure update}).

\paragraph{Sequential specification.}
We now define the sequential specification $L(T)$ of an ADT $T$. A sequential specification is a set 
of sequences of operations that label paths in the transition system, starting from the initial state. 
We need to take into account two additional features in our model: prefixation and hidden operations (Def. \ref{def:L(T)}).
\begin{itemize}
\item We need to take into consideration both finite and infinite sequences. To do so, we first define infinite
  sequences recognized by $T$, and we then extend the concept to the finite prefixes of these sequences.
\item In weak consistency criteria defined on memory, and especially in causal memory, reads and writes 
  usually play a different role. To extend these concepts to generic ADTs where some operations are both
  an update and a query, we need a way to express the fact that the side effect of an operation must be 
  taken into account, but not its return value. To do so, we introduce the notion of \emph{hidden operations},
  in which the method called is known, but not the returned value. Thus, sequential histories admissible 
  for $T$ are sequences of elements of $\Sigma = (\Sigma_i\times \Sigma_o) \cup \Sigma_i$: each element of $\Sigma$ is 
  either an operation $\sigma_i/\sigma_o \in (\Sigma_i\times \Sigma_o)$ or a hidden operation $\sigma_i \in \Sigma_i$.
\end{itemize}

\begin{definition}\label{def:L(T)}
  Let $T = (\Sigma_i, \Sigma_o, Q, q_0, \delta, \lambda)$ be an abstract data type.

  An infinite sequence $(\sigma_i^k/\sigma_o^k)_{k\in \mathbb{N}}$ of operations is \emph{recognized by $T$} if 
  there exists an infinite sequence of states $(q^k)_{k\in \mathbb{N}}$ such that $q^0 = q_0$ is the initial state 
  and for all $k\in \mathbb{N}$, $\delta(q^k, \sigma_i^k) = q^{k+1}$ and $\lambda(q^k, \sigma_i^k) = \sigma_o^k$.

  A finite or infinite sequence $u = (u^k)_{k\in D}$ where $D$ is either $\mathbb{N}$ or $\{0, ..., |u|-1\}$ is a 
  \emph{sequential history admissible for $T$} if there exists an infinite sequence $(\sigma_i^k/\sigma_o^k)_{k\in \mathbb{N}}$ of operations recognized by $T$
  such that, for all $k\in D$, $u^k = \sigma_i^k/\sigma_o^k$ or $u^k = \sigma_i^k$.
  
  The set of all sequential histories admissible for $T$, denoted by $L(T)$, is called the \emph{sequential specification} of $T$.
\end{definition}

\paragraph{Window stream data type.}
Causal consistency has been defined only for memory. The memory abstract data type is very restrictive as a write 
on a register erases the complete past of all the previously written values on that register. For more complex objects 
like stacks and queues, the value returned by a query may depend on more than one update operation, and the order 
in which the different updates were done is important. To illustrate our work on consistency criteria, we need, as a guideline example,
a data type with a simple specification and whose behaviour shows all these features.

We thus introduce the \emph{window stream} data type. In short, it can be seen as a generalization of a register in the sense that the read operation returns the sequence of the last written values instead of the very last one. A window stream of size $k$ (noted $\mathcal{W}_k)$ can be accessed by a \emph{write} operation
$w(v)$ where $v\in \mathbb{N}$ is the written value and a \emph{read} operation $r$ that returns 
the sequence of the last $k$ written values. Missing values are replaced by the 
default value $0$ (any different default value can be considered). More formally, a window stream corresponds to the ADT given in Def. \ref{def:window_reg}.

\pagebreak
\begin{definition}\label{def:window_reg} An integer \emph{window stream} of size $k$ ($k\in \mathbb{N}$) is an ADT
  $\mathcal{W}_k =  (\Sigma_i, \Sigma_o, Q, q_0, \delta, \lambda)$
  with  $\Sigma_i = \cup_{v\in \mathbb{N}}\{r, w(v)\}$, $\Sigma_o = \mathbb{N}^k \cup \{\bot\}$, $Q = \mathbb{N}^k$, $q_0 = (0, ..., 0)$ and,
  for all $v\in \mathbb{N}$ and $q = (q_1, ..., q_k) \in Q$, $\delta(q, w(v)) = (q_2, ...,  q_k, v)$, $\delta(q, r) = q$, 
  $\lambda(q, w(v)) = \bot$ and $\lambda(q, r) = q$.
\end{definition}

The window stream data type has also a great interest in the classification of synchronization objects. The notion of consensus number has been introduced in \cite{herlihy1991wait} to rank the synchronization power of objects. An object has a consensus number equal to $c$ if it allows to reach consensus among $c$ processes and not among $c+1$ processes. Recall that a consensus object can be invoked by a given number of processes. Each process invokes it once with a proposed value and gets a return value such that the returned value has been proposed by some process and all invoking processes obtain the same return value. An object has a consensus number of $c$ if it can emulate a consensus object that can be invoked by at most $c$ processes. As an example the consensus number of a register and a stack are respectively 1 and 2 while the well-known synchronization object compare-and-swap allows to reach consensus among any number of processes. It is interesting to note that a window stream of size $k$ has a consensus number of $k$: if $k$ processes write their proposed values in a sequentially consistent window stream and then return the oldest written value (different from the default value), they will all return the same value. Consequently, a window stream of size at least 2 cannot be implemented using any number of registers (window streams of size 1). 

Additional examples with queues and the complete definition of the memory ADT are given in section \ref{sec:cc}.

\subsection{Distributed histories}

During the execution of a distributed program, the participants/processes call methods on
shared objects (registers, stacks, queues, etc.), an object being an instance of an abstract
data type. An event is the execution of a method by a process. Thereby, each event is labelled 
by an operation from a set $\Sigma$, that usually contains the same symbols as the alphabet 
of the sequential specification $L(T)$.

In a distributed system composed of communicating sequential processes,
all the events produced by one process are totally ordered according to the \emph{program order},
while two events produced by different processes may be incomparable according to the program order. 
In this model, the partially ordered set of events is a collection of disjoint maximal chains: 
each maximal chain of the history corresponds to the events of one process. 
We identify the processes and the events they produce, calling a maximal chain in the history a "process".

Parallel sequences of maximal chains are a too restrictive model to encode the complex behaviour of many distributed systems,
such as multithreaded programs in which threads can fork and join, Web services orchestrations, sensor networks, etc.
Instead, we allow the program order to be any partial order in which all events have a finite past.
In this general model, an event can be contained is several maximal chains. This causes no problems in our definitions.  

\begin{definition}
  A distributed history (or simply history) is a 4-tuple $H = (\Sigma, E, \Lambda, \mapsto)$ such that:
 $\Sigma$ is a countable sets of \emph{operations} in the form $\sigma_i/\sigma_o$ or $\sigma_i$;
 $E$ is a countable set of \emph{events} (denoted by $E_H$ for any history $H$);
 $\Lambda : E \rightarrow \Sigma$ is a \emph{labelling function};
 ${\mapsto} \subset {(E\times E)}$ is a partial order called \emph{program order}, such that
  each event $e\in E$ has a finite past $\{e'\in E : e' \mapsto e\}$. 
\end{definition}

Let $H = (\Sigma, E, \Lambda, \mapsto)$ be a distributed history. Let us introduce a few notations.
\begin{itemize}
\item $\mathscr{P}_H$ denotes the set of the maximal chains of $H$, i.e. maximal totally-ordered sets of events.
  In the case of sequential processes, each $p\in \mathscr{P}_H$ corresponds to the events produced by a process.
  In the remainder of this article, we use the term "process" to designate such a chain, even in models that are not based on a 
  collection of communicating sequential processes.
\item A linearization of $H$ is a sequential history that contains the events of $H$ 
  in an order consistent with the program order. More precisely, it is a word 
  $\Lambda(e_0)\ldots\Lambda(e_i)\ldots$ such that $\{e_0, \ldots, e_i, \ldots\} = E_H$
  and for all $i<j$, $e_j\not\mapsto e_i$.
  $\lin(H)$ denotes the set of all linearizations of $H$.
\item We also define a projection operator $p$ that
  removes part of the information of the history. For $E', E''\subset E$, 
  $H.\pi(E', E'')$ only keeps the operations that are in $E'$, and hides the 
  output of the events that are not in $E''$: 
  $H.\pi(E', E'') = (\Sigma, E', \Lambda', {\mapsto} \cap {E'^2})$
  with 
  \begin{itemize}
  \item $\Lambda'(e) = \sigma_i$ if $\Lambda(e) = \sigma_i/\sigma_o$ and $e\not \in E''$
  \item $\Lambda'(e) = \Lambda(e)$ otherwise.
  \end{itemize}
  Considering memory, $H.\pi(E', E'')$ contains the writes of $E'$ and the reads of $E' \cap E''$.
\item Finally, we define a projection on the histories to replace the program order 
  by another order $\rightarrow$: if $\rightarrow$ respects the 
  definition of a program order (i.e. all events have a finite past in $\rightarrow$), 
  $H^\rightarrow = (\Sigma, E, \Lambda, \rightarrow)$ is the history that contains the same events as 
  $H$, but ordered according to $\rightarrow$.
\end{itemize}

Note that the discreteness of the space of the events does not mean 
that the operations must return immediately, as our model does not 
introduce any notion of real time.

\subsection{Consistency criteria}

A consistency criterion characterizes which histories are admissible for a given data type. 
Graphically, we can imagine a consistency criterion as a way to take a picture of the 
distributed histories so that they look sequential.
More formally, it is a function $C$ that associates a set of consistent histories $C(T)$ 
with any ADT $T$. An implementation of a shared object is $C$-consistent for a consistency
criterion $C$ and an ADT $T$ if all the histories it admits are in $C(T)$.
For the sake of clarity, we will define consistency criteria by a predicate $P(T, H)$ 
that depends on an ADT $T$ and a distributed history $H$. A criterion is defined as 
the function that associates to each $T$, the set of all the histories $H$ such that $P(T, H)$ is true.

We say that a criterion $C_1$ is \emph{stronger} than a criterion $C_2$ if for any ADT $T$, 
$C_1(T) \subset C_2(T)$. A strong consistency criterion guarantees stronger properties on the 
histories it admits. Hence, a $C_1$-consistent implementation can always be used instead of a 
$C_2$-consistent implementation of the same abstract data type if $C_1$ is stronger than $C_2$.
We now define sequential consistency \cite{lamport1979make} 
and pipelined consistency \cite{lipton1988pram} to illustrate this formalism.

\paragraph{Sequential consistency.} was originally defined by Lamport 
in \cite{lamport1979make}: {\em the result of any execution is the same as if the operations 
of all the processors were executed in some sequential order, and the operations of each 
individual processor appear in this sequence in the order specified by its program}. 
In our formalism, such a sequence is a word of operations that has two properties:
it is correct with respect to the sequential specification of the object (i.e. it belongs to $L(T)$)
and the total order is compatible with the program order (i.e. it belongs to $\lin(H)$).

\begin{definition}
  A history $H$ is \emph{sequentially consistent} (SC) with an ADT $T$ if: $\lin(H) \cap L(T) \neq \emptyset.$
\end{definition}

\paragraph{Pipelined consistency.} The PRAM consistency criterion (for "Pipelined Random Access Memory") 
has been defined for shared memory \cite{lipton1988pram}. In PRAM consistency, the processes only have 
a partial view of the history.
More precisely, they are aware of their own reads and all the writes. PRAM consistency ensures that
the view of each process is consistent with the order in which the writes were made by each process.
Each process must be able to explain the history by a linearization of its own knowledge. 
Pipelined consistency is weaker than sequential consistency, for
which it is additionally required that the linearizations seen by different processes be identical.
The PRAM consistency is local to each process. As different processes can see concurrent 
updates in a different order, the values of the registers do not necessarily converge.

\emph{Pipelined consistency} is an extension of PRAM consistency to other abstract data types. 
As not all operations are either pure updates or pure queries, we use the projection operator to 
hide the return values (the output alphabet) of all the events that are not made by a process.
For each process $p$, $H.\pi(E_H, p)$ is the history that contains all the events of $p$ unchanged,
and the return values of the operations labelling the events of the other processes are unknown.
Pipelined consistency corresponds to Def. \ref{def:PC}. 

\begin{definition}
  \label{def:PC}
  $H$ is \emph{pipelined consistent} (PC) with $T$ if:\linebreak
  $\forall p\in \mathscr{P}_H, \lin\left(H.\pi(E_H, p)\right) \cap L(T) \neq \emptyset.$
\end{definition}

\section{Weak causal consistency}\label{section:weak_causal_consistency}

\subsection{Causal orders and time zones}

Causal consistency is based on the thought that a distributed system is depicted
by a partial order that represents a \emph{logical time} in which the processes
evolve at their own pace. This partial order, called \emph{causal order}, contains 
the sequential arrangement imposed by the processes. Additionally, an event cannot 
be totally ignored by a process forever (see Def. \ref{def:causal_order}), which 
corresponds to the eventual reception in message-passing systems. There are three reasons why 
cofiniteness is important in our model.
\begin{enumerate}
\item For infinite histories, cofiniteness usually prevents the causal order to be the program order. 
  If we did not impose this restriction, the obtained criteria would be much weaker, as it would not force the processes to 
  interact at all. Such criteria could be implemented trivially, each process updating its own local 
  variable. However, they would not be so useful in distributed systems. 
\item It is usually stated that causal memory is stronger than PRAM. From 
  Def. \ref{def:PC}, the operation associated with each event stands at some finite 
  position in the linearization required for each process. Thus, a criterion in which 
  processes are not required to communicate would not strenghten pipelined consistency.
\item It is also important to ensure that causal convergence is stronger than eventual consistency:
  convergence can only be achieved when all processes have the same updates in their causal past;
  to strenghten eventual consistency, we must ensure that, if all processes stop updating then, eventually,
  all processes will have all the updates in their causal past.
\end{enumerate}
\begin{definition}\label{def:causal_order}
  Let $H$ be a distributed history.
  A \emph{causal order} is a partial order $\rightarrow$ on all the events of $E_H$, 
  that contains $\mapsto$, and such that for all $e\in E_H$, 
  $\{e'\in E_H : e\not\rightarrow e'\}$ is finite.
\end{definition}

In a distributed history augmented with a causal order, for each event $e$, the history
can be divided into six zones: the causal (resp. program) past that contains the predecessors
of $e$ in the causal (resp. program) order, the causal (resp. program) future that contains the 
successors of $e$ in the causal (resp. program) order, the present that contains only $e$ and 
the concurrent present that contains the events incomparable with $e$ for both orders. These zones 
are depicted in Fig. \ref{fig:CC:cones}. The causal past of $e$ is denoted 
by $\lfloor e\rfloor = \{e'\in E_H : e'\rightarrow e\}$. 

\begin{figure*}
  \centering
  \hspace{\fill}
  \begin{subfigure}[b]{0.49\textwidth}
    \centering
    \scalebox{0.8}{
      \begin{tikzpicture}

        \fill[allEvtColorFill] (-0.5,1.75) -- (-0.5,0.25) -- (3.5,0.25) -- (3.5,1.75) -- cycle;
        \fill[allEvtColorFill] (3.5,0.25) -- (3.5,1.75) -- (5.5,1.75) -- (5.5,0.25) -- cycle;

        \fill[writeColorFill] (-0.5,1.75) -- (3.5,1.75) -- (3.5,3) -- (-0.5,3) -- cycle;
        \fill[pattern=north east lines, pattern color=writeColorFill!30!white] (-0.5,1.75) -- (3.5,1.75) -- (3.5,3) -- (-0.5,3) -- cycle;
        \fill[white, path fading=west] (-0.5,1.75) -- (3.5,1.75) -- (3.5,3) -- (-0.5,3) -- cycle;

        \fill[writeColorFill] (-0.5,0.25) -- (3.5,0.25) -- (3.5,-1) -- (-0.5,-1) -- cycle;
        \fill[pattern=north east lines, pattern color=writeColorFill!30!white] (-0.5,0.25) -- (3.5,0.25) -- (3.5,-1) -- (-0.5,-1) -- cycle;
        \fill[white, path fading=west] (-0.5,0.25) -- (3.5,0.25) -- (3.5,-1) -- (-0.5,-1) -- cycle;

        \draw[-] (3.5,1.75) -- (1.75,3) ;
        \draw[-] (3.5,0.25) -- (1.75,-1) ;
        \draw[-] (5.5,1.75) -- (7.25,3) ;
        \draw[-] (5.5,0.25) -- (7.25,-1) ;
        
        \draw[-] (-0.5,0.25) -- (3.5,0.25) ;
        \draw[-] (3.5,0.25) -- (5.5,0.25) ;
        \draw[-] (5.5,0.25) -- (9.5,0.25) ;
        
        \draw[-] (-0.5,1.75) -- (3.5,1.75) ;
        \draw[-] (3.5,1.75) -- (5.5,1.75) ;
        \draw[-] (5.5,1.75) -- (9.5,1.75) ;
        
        \draw[-] (3.5,0.25) -- (3.5,1.75) ;
        \draw[-] (5.5,0.25) -- (5.5,1.75) ;
                
        \draw[writeColor]      (0.5,2)      node{$\bullet$} ;
        \draw      (0.5,2)      node[above]{$\mathbf{\sigma_i^{2}}{\color{gris}/\sigma_o^{2}}$} ;
        \draw[|->] (0.7,2) -- (3.8,2) ;
        \draw[gris]      (4,2)    node{$\bullet$} ;
        \draw[gris]      (4,2)    node[above]{$\sigma_i^6/\sigma_o^6$} ;
        \draw[|->] (4.2,2) -- (7.1,2) ;
        \draw[gris]      (7.3,2)      node{$\bullet$} ;
        \draw[gris]      (7.3,2)      node[above]{$\sigma_i^{9}/\sigma_o^{9}$} ;
        \draw[|->] (7.5,2) -- (8.5,2) ;
        \draw[gris]      (8.7,2)      node{$\bullet$} ;
        \draw[gris]      (8.7,2)      node[above]{$\sigma_i^{12}/\sigma_o^{12}$} ;
        
        \draw[allEvtColor]      (0.8,1)      node{$\bullet$} ;
        \draw      (0.8,1)      node[below]{$\mathbf{\sigma_i^3/\sigma_o^3}$} ;
        \draw[|->] (1,1) -- (1.8,1) ;
        \draw[allEvtColor]      (2,1)      node{$\bullet$} ;
        \draw      (2,1)      node[below]{$\mathbf{\sigma_i^5/\sigma_o^5}$} ;
        \draw[|->] (2.2,1) -- (4.3,1) ;
        \draw[allEvtColor]      (4.5,1)    node{$\bullet$} ;
        \draw      (4.5,1)    node[below]{$\mathbf{\sigma_i^7/\sigma_o^7}$} ;
        \draw[|->] (4.7,1) -- (7.8,1) ;
        \draw[gris]      (8,1)      node{$\bullet$} ;
        \draw[gris]      (8,1)      node[below]{$\sigma_i^{10}/\sigma_o^{10}$} ;
        
        \draw[writeColor]      (0.3,0)      node{$\bullet$} ;
        \draw      (0.3,0)      node[below]{$\mathbf{\sigma_i^{1}}{\color{gris}/\sigma_o^{1}}$} ;
        \draw[|->] (0.5,0) -- (1.3,0) ;
        \draw[gris]      (1.5,0)      node{$\bullet$} ;
        \draw[gris]      (1.5,0)      node[below]{$\sigma_i^4/\sigma_o^4$} ;
        \draw[|->] (1.7,0) -- (4.8,0) ;
        \draw[gris]      (5,0)    node{$\bullet$} ;
        \draw[gris]      (5,0)    node[below]{$\sigma_i^8/\sigma_o^8$} ;
        \draw[|->] (5.2,0) -- (8.3,0) ;
        \draw[gris]      (8.5,0)      node{$\bullet$} ;
        \draw[gris]      (8.5,0)      node[below]{$\sigma_i^{11}/\sigma_o^{11}$} ;

        \draw      (-0.5,2.75)   node[right]{causal past} ;
        \draw      (-0.5,1.45)   node[right]{program past} ;
        \draw      (-0.5,-0.75)   node[right]{causal past} ;
        
        \draw      (4.5,2.75)    node{concurrent present} ;
        \draw      (4.5,1.45)    node{present} ;
        \draw      (4.5,-0.75)    node{concurrent present} ;
        
        \draw      (9.5,2.75)    node[left]{causal future} ;
        \draw      (9.5,1.45)    node[left]{program future} ;
        \draw      (9.5,-0.75)    node[left]{causal future} ;

      \end{tikzpicture}
    }
    \caption{Pipelined consistency.}
    \label{subfig:CC:cones:PC}
  \end{subfigure}
  \hspace{\fill}
  \begin{subfigure}[b]{0.49\textwidth}
    \centering
    \scalebox{0.8}{
      \begin{tikzpicture}

        \fill[allEvtColorFill] (3.5,0.25) -- (3.5,1.75) -- (5.5,1.75) -- (5.5,0.25) -- cycle;

        \fill[writeColorFill] (-0.5,3) -- (1.75,3) -- (3.5,1.75) -- (3.5,0.25) -- (1.75,-1) -- (-0.5,-1) -- cycle;
        \fill[pattern=north east lines, pattern color=writeColorFill!30!white] (-0.5,3) -- (1.75,3) -- (3.5,1.75) -- (3.5,0.25) -- (1.75,-1) -- (-0.5,-1) -- cycle;
                
        \draw[-] (3.5,1.75) -- (1.75,3) ;
        \draw[-] (3.5,0.25) -- (1.75,-1) ;
        \draw[-] (5.5,1.75) -- (7.25,3) ;
        \draw[-] (5.5,0.25) -- (7.25,-1) ;
        
        \draw[-] (-0.5,0.25) -- (3.5,0.25) ;
        \draw[-] (3.5,0.25) -- (5.5,0.25) ;
        \draw[-] (5.5,0.25) -- (9.5,0.25) ;
        
        \draw[-] (-0.5,1.75) -- (3.5,1.75) ;
        \draw[-] (3.5,1.75) -- (5.5,1.75) ;
        \draw[-] (5.5,1.75) -- (9.5,1.75) ;
        
        \draw[-] (3.5,0.25) -- (3.5,1.75) ;
        \draw[-] (5.5,0.25) -- (5.5,1.75) ;
                
        \draw[writeColor]      (0.5,2)      node{$\bullet$} ;
        \draw      (0.5,2)      node[above]{$\mathbf{\sigma_i^{2}}{\color{gris}/\sigma_o^{2}}$} ;
        \draw[|->] (0.7,2) -- (3.8,2) ;
        \draw[gris]      (4,2)    node{$\bullet$} ;
        \draw[gris]      (4,2)    node[above]{$\sigma_i^6/\sigma_o^6$} ;
        \draw[|->] (4.2,2) -- (7.1,2) ;
        \draw[gris]      (7.3,2)      node{$\bullet$} ;
        \draw[gris]      (7.3,2)      node[above]{$\sigma_i^{9}/\sigma_o^{9}$} ;
        \draw[|->] (7.5,2) -- (8.5,2) ;
        \draw[gris]      (8.7,2)      node{$\bullet$} ;
        \draw[gris]      (8.7,2)      node[above]{$\sigma_i^{12}/\sigma_o^{12}$} ;
        
        \draw[writeColor]      (0.8,1)      node{$\bullet$} ;
        \draw      (0.8,1)      node[below]{$\mathbf{\sigma_i^{3}}{\color{gris}/\sigma_o^{3}}$} ;
        \draw[|->] (1,1) -- (1.8,1) ;
        \draw[writeColor]      (2,1)      node{$\bullet$} ;
        \draw      (2,1)      node[below]{$\mathbf{\sigma_i^{5}}{\color{gris}/\sigma_o^{5}}$} ;
        \draw[|->] (2.2,1) -- (4.3,1) ;
        \draw[allEvtColor]      (4.5,1)    node{$\bullet$} ;
        \draw      (4.5,1)    node[below]{$\mathbf{\sigma_i^7/\sigma_o^7}$} ;
        \draw[|->] (4.7,1) -- (7.8,1) ;
        \draw[gris]      (8,1)      node{$\bullet$} ;
        \draw[gris]      (8,1)      node[below]{$\sigma_i^{10}/\sigma_o^{10}$} ;
        
        \draw[writeColor]      (0.3,0)      node{$\bullet$} ;
        \draw      (0.3,0)      node[below]{$\mathbf{\sigma_i^{1}}{\color{gris}/\sigma_o^{1}}$} ;
        \draw[|->] (0.5,0) -- (1.3,0) ;
        \draw[writeColor]      (1.5,0)      node{$\bullet$} ;
        \draw      (1.5,0)      node[below]{$\mathbf{\sigma_i^{4}}{\color{gris}/\sigma_o^{4}}$} ;
        \draw[|->] (1.7,0) -- (4.8,0) ;
        \draw[gris]      (5,0)    node{$\bullet$} ;
        \draw[gris]      (5,0)    node[below]{$\sigma_i^8/\sigma_o^8$} ;
        \draw[|->] (5.2,0) -- (8.3,0) ;
        \draw[gris]      (8.5,0)      node{$\bullet$} ;
        \draw[gris]      (8.5,0)      node[below]{$\sigma_i^{11}/\sigma_o^{11}$} ;

        \draw      (-0.5,2.75)   node[right]{causal past} ;
        \draw      (-0.5,1.45)   node[right]{program past} ;
        \draw      (-0.5,-0.75)   node[right]{causal past} ;
        
        \draw      (4.5,2.75)    node{concurrent present} ;
        \draw      (4.5,1.45)    node{present} ;
        \draw      (4.5,-0.75)    node{concurrent present} ;
        
        \draw      (9.5,2.75)    node[left]{causal future} ;
        \draw      (9.5,1.45)    node[left]{program future} ;
        \draw      (9.5,-0.75)    node[left]{causal future} ;

      \end{tikzpicture}
    }
    \caption{Weak causal consistency.}
    \label{subfig:CC:cones:WCC}
  \end{subfigure}
  \hspace{\fill}

\vspace{4mm}

  \hspace{\fill}
  \begin{subfigure}[b]{0.49\textwidth}
    \centering
    \scalebox{0.8}{
      \begin{tikzpicture}

        \fill[allEvtColorFill] (-0.5,1.75) -- (-0.5,0.25) -- (3.5,0.25) -- (3.5,1.75) -- cycle;
        \fill[allEvtColorFill] (3.5,0.25) -- (3.5,1.75) -- (5.5,1.75) -- (5.5,0.25) -- cycle;

        \fill[writeColorFill] (-0.5,1.75) -- (3.5,1.75) -- (1.75,3) -- (-0.5,3) -- cycle;
        \fill[pattern=north east lines, pattern color=writeColorFill!30!white] (-0.5,1.75) -- (3.5,1.75) -- (1.75,3) -- (-0.5,3) -- cycle;

        \fill[writeColorFill] (-0.5,0.25) -- (3.5,0.25) -- (1.75,-1) -- (-0.5,-1) -- cycle;
        \fill[pattern=north east lines, pattern color=writeColorFill!30!white] (-0.5,0.25) -- (3.5,0.25) -- (1.75,-1) -- (-0.5,-1) -- cycle;
                
        \draw[-] (3.5,1.75) -- (1.75,3) ;
        \draw[-] (3.5,0.25) -- (1.75,-1) ;
        \draw[-] (5.5,1.75) -- (7.25,3) ;
        \draw[-] (5.5,0.25) -- (7.25,-1) ;
        
        \draw[-] (-0.5,0.25) -- (3.5,0.25) ;
        \draw[-] (3.5,0.25) -- (5.5,0.25) ;
        \draw[-] (5.5,0.25) -- (9.5,0.25) ;
        
        \draw[-] (-0.5,1.75) -- (3.5,1.75) ;
        \draw[-] (3.5,1.75) -- (5.5,1.75) ;
        \draw[-] (5.5,1.75) -- (9.5,1.75) ;
        
        \draw[-] (3.5,0.25) -- (3.5,1.75) ;
        \draw[-] (5.5,0.25) -- (5.5,1.75) ;
                
        \draw[writeColor]      (0.5,2)      node{$\bullet$} ;
        \draw      (0.5,2)      node[above]{$\mathbf{\sigma_i^{2}}{\color{gris}/\sigma_o^{2}}$} ;
        \draw[|->] (0.7,2) -- (3.8,2) ;
        \draw[gris]      (4,2)    node{$\bullet$} ;
        \draw[gris]      (4,2)    node[above]{$\sigma_i^6/\sigma_o^6$} ;
        \draw[|->] (4.2,2) -- (7.1,2) ;
        \draw[gris]      (7.3,2)      node{$\bullet$} ;
        \draw[gris]      (7.3,2)      node[above]{$\sigma_i^{9}/\sigma_o^{9}$} ;
        \draw[|->] (7.5,2) -- (8.5,2) ;
        \draw[gris]      (8.7,2)      node{$\bullet$} ;
        \draw[gris]      (8.7,2)      node[above]{$\sigma_i^{12}/\sigma_o^{12}$} ;
        
        \draw[allEvtColor]      (0.8,1)      node{$\bullet$} ;
        \draw      (0.8,1)      node[below]{$\mathbf{\sigma_i^3/\sigma_o^3}$} ;
        \draw[|->] (1,1) -- (1.8,1) ;
        \draw[allEvtColor]      (2,1)      node{$\bullet$} ;
        \draw      (2,1)      node[below]{$\mathbf{\sigma_i^5/\sigma_o^5}$} ;
        \draw[|->] (2.2,1) -- (4.3,1) ;
        \draw[allEvtColor]      (4.5,1)    node{$\bullet$} ;
        \draw      (4.5,1)    node[below]{$\mathbf{\sigma_i^7/\sigma_o^7}$} ;
        \draw[|->] (4.7,1) -- (7.8,1) ;
        \draw[gris]      (8,1)      node{$\bullet$} ;
        \draw[gris]      (8,1)      node[below]{$\sigma_i^{10}/\sigma_o^{10}$} ;
        
        \draw[writeColor]      (0.3,0)      node{$\bullet$} ;
        \draw      (0.3,0)      node[below]{$\mathbf{\sigma_i^{1}}{\color{gris}/\sigma_o^{1}}$} ;
        \draw[|->] (0.5,0) -- (1.3,0) ;
        \draw[writeColor]      (1.5,0)      node{$\bullet$} ;
        \draw      (1.5,0)      node[below]{$\mathbf{\sigma_i^{4}}{\color{gris}/\sigma_o^{4}}$} ;
        \draw[|->] (1.7,0) -- (4.8,0) ;
        \draw[gris]      (5,0)    node{$\bullet$} ;
        \draw[gris]      (5,0)    node[below]{$\sigma_i^8/\sigma_o^8$} ;
        \draw[|->] (5.2,0) -- (8.3,0) ;
        \draw[gris]      (8.5,0)      node{$\bullet$} ;
        \draw[gris]      (8.5,0)      node[below]{$\sigma_i^{11}/\sigma_o^{11}$} ;

        \draw      (-0.5,2.75)   node[right]{causal past} ;
        \draw      (-0.5,1.45)   node[right]{program past} ;
        \draw      (-0.5,-0.75)   node[right]{causal past} ;
        
        \draw      (4.5,2.75)    node{concurrent present} ;
        \draw      (4.5,1.45)    node{present} ;
        \draw      (4.5,-0.75)    node{concurrent present} ;
        
        \draw      (9.5,2.75)    node[left]{causal future} ;
        \draw      (9.5,1.45)    node[left]{program future} ;
        \draw      (9.5,-0.75)    node[left]{causal future} ;

      \end{tikzpicture}
    }
    \caption{Causal consistency.}
    \label{subfig:CC:cones:CC}
  \end{subfigure}
  \hspace{\fill}
  \begin{subfigure}[b]{0.49\textwidth}
    \centering
    \scalebox{0.8}{
      \begin{tikzpicture}

        \fill[allEvtColorFill] (-0.5,1.75) -- (3.5,1.75) -- (1.75,3) -- (-0.5,3) -- cycle;
        \fill[allEvtColorFill] (-0.5,0.25) -- (3.5,0.25) -- (1.75,-1) -- (-0.5,-1) -- cycle;
        \fill[allEvtColorFill] (-0.5,1.75) -- (-0.5,0.25) -- (3.5,0.25) -- (3.5,1.75) -- cycle;
        \fill[allEvtColorFill] (3.5,0.25) -- (3.5,1.75) -- (5.5,1.75) -- (5.5,0.25) -- cycle;
                
        \draw[-] (3.5,1.75) -- (1.75,3) ;
        \draw[-] (3.5,0.25) -- (1.75,-1) ;
        \draw[-] (5.5,1.75) -- (7.25,3) ;
        \draw[-] (5.5,0.25) -- (7.25,-1) ;
        
        \draw[-] (-0.5,0.25) -- (3.5,0.25) ;
        \draw[-] (3.5,0.25) -- (5.5,0.25) ;
        \draw[-] (5.5,0.25) -- (9.5,0.25) ;
        
        \draw[-] (-0.5,1.75) -- (3.5,1.75) ;
        \draw[-] (3.5,1.75) -- (5.5,1.75) ;
        \draw[-] (5.5,1.75) -- (9.5,1.75) ;
        
        \draw[-] (3.5,0.25) -- (3.5,1.75) ;
        \draw[-] (5.5,0.25) -- (5.5,1.75) ;
                
        \draw[allEvtColor]      (0.5,2)      node{$\bullet$} ;
        \draw      (0.5,2)      node[above]{$\mathbf{\sigma_i^2/\sigma_o^2}$} ;
        \draw[|->] (0.7,2) -- (1.6,2) ;
        \draw[allEvtColor]      (1.8,2)    node{$\bullet$} ;
        \draw      (1.8,2)    node[above]{$\mathbf{\sigma_i^8/\sigma_o^8}$} ;
        \draw[|->] (2,2) -- (7.1,2) ;
        \draw[gris]      (7.3,2)      node{$\bullet$} ;
        \draw[gris]      (7.3,2)      node[above]{$\sigma_i^{9}/\sigma_o^{9}$} ;
        \draw[|->] (7.5,2) -- (8.5,2) ;
        \draw[gris]      (8.7,2)      node{$\bullet$} ;
        \draw[gris]      (8.7,2)      node[above]{$\sigma_i^{12}/\sigma_o^{12}$} ;
        
        \draw[allEvtColor]      (0.8,1)      node{$\bullet$} ;
        \draw      (0.8,1)      node[below]{$\mathbf{\sigma_i^3/\sigma_o^3}$} ;
        \draw[|->] (1,1) -- (1.8,1) ;
        \draw[allEvtColor]      (2,1)      node{$\bullet$} ;
        \draw      (2,1)      node[below]{$\mathbf{\sigma_i^5/\sigma_o^5}$} ;
        \draw[|->] (2.2,1) -- (4.3,1) ;
        \draw[allEvtColor]      (4.5,1)    node{$\bullet$} ;
        \draw      (4.5,1)    node[below]{$\mathbf{\sigma_i^7/\sigma_o^7}$} ;
        \draw[|->] (4.7,1) -- (7.8,1) ;
        \draw[gris]      (8,1)      node{$\bullet$} ;
        \draw[gris]      (8,1)      node[below]{$\sigma_i^{10}/\sigma_o^{10}$} ;
        
        \draw[allEvtColor]      (0.3,0)      node{$\bullet$} ;
        \draw      (0.3,0)      node[below]{$\mathbf{\sigma_i^1/\sigma_o^1}$} ;
        \draw[|->] (0.5,0) -- (1.3,0) ;
        \draw[allEvtColor]      (1.5,0)      node{$\bullet$} ;
        \draw      (1.5,0)      node[below]{$\mathbf{\sigma_i^4/\sigma_o^4}$} ;
        \draw[|->] (1.7,0) -- (6.9,0) ;
        \draw[gris]      (7.1,0)    node{$\bullet$} ;
        \draw[gris]      (7.1,0)    node[below]{$\sigma_i^8/\sigma_o^8$} ;
        \draw[|->] (7.3,0) -- (8.3,0) ;
        \draw[gris]      (8.5,0)      node{$\bullet$} ;
        \draw[gris]      (8.5,0)      node[below]{$\sigma_i^{11}/\sigma_o^{11}$} ;

        \draw      (-0.5,2.75)   node[right]{causal past} ;
        \draw      (-0.5,1.45)   node[right]{program past} ;
        \draw      (-0.5,-0.75)   node[right]{causal past} ;
        
        \draw      (4.5,2.75)    node{concurrent present} ;
        \draw      (4.5,1.45)    node{present} ;
        \draw      (4.5,-0.75)    node{concurrent present} ;
        
        \draw      (9.5,2.75)    node[left]{causal future} ;
        \draw      (9.5,1.45)    node[left]{program future} ;
        \draw      (9.5,-0.75)    node[left]{causal future} ;

      \end{tikzpicture}
    }
    \caption{Sequential consistency.}
    \label{subfig:CC:cones:SC}
  \end{subfigure}
  \hspace{\fill}
  \caption{The differences between causality criteria can be explained in terms of time zones.
    The more constraints the past imposes on the present, the stronger the criterion. 
    The zones in plain blue must be respected totally, and the updates of the zones in striped orange must be taken into account.}
  \label{fig:CC:cones}
\end{figure*}
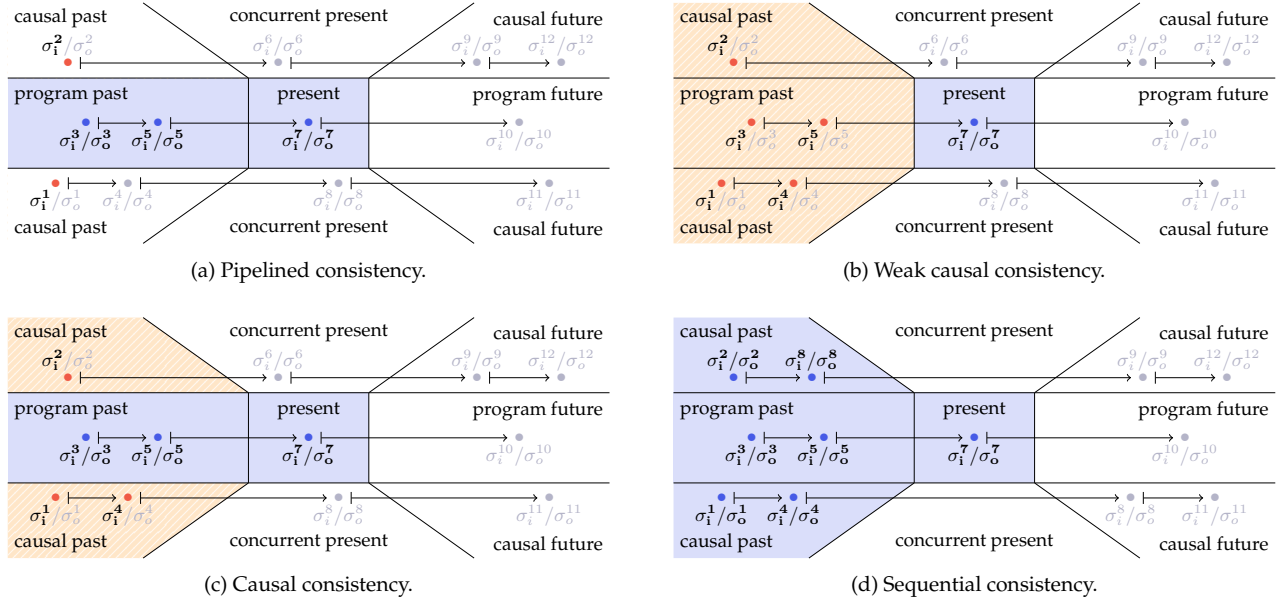

Causal consistency aims at providing a causal order that can be helpful for the
final user when designing an application at a higher level. Causality is not an order imposed by
outer conditions (e.g. the network system), even if causal reception can help in the implementation.
Thus, the existence of a causal order is only required, but not necessarily unique.
An illustration of this point is the fact that no communication is required to insert pure update operations 
into the causal order. 

\begin{figure*}[t]
  \centering
  \hspace{\fill}
  \begin{subfigure}[b]{0.23\textwidth}
    \centering
    \begin{tikzpicture}
      \draw      (1.2,1)    node{$\bullet$} ;
      \draw      (1.2,1)    node[above]{$w(1)$} ;
      \draw[|->] (1.4,1) -- (2.3,1) ;
      \draw      (2.5,1)    node{$\bullet$} ;
      \draw      (2.5,1)    node[above]{$r/(0,1)$} ;
      \draw[|->] (2.7,1) -- (3.6,1) ;
      \draw      (3.8,1)    node{$\bullet$} ;
      \draw      (3.8,1)    node[above]{$r/(1,2)$} ;

      \draw      (1.2,0.5)    node{$\bullet$} ;
      \draw      (1.2,0.5)    node[below]{$w(2)$} ;
      \draw[|->] (1.4,0.5) -- (2.3,0.5) ;
      \draw      (2.5,0.5)    node{$\bullet$} ;
      \draw      (2.5,0.5)    node[below]{$r/(0,2)$} ;
      \draw[|->] (2.7,0.5) -- (3.6,0.5) ;
      \draw      (3.8,0.5)    node{$\bullet$} ;
      \draw      (3.8,0.5)    node[below]{$r/(1,2)$} ;
      
      \draw[->,dashed] (1.4,0.6) -- (3.6,0.9) ;
      \draw[->,dashed] (1.4,0.9) -- (3.6,0.6) ;
    \end{tikzpicture}
    \caption{\footnotesize $\mathcal{W}_2$: CCv, not PC}
    \label{fig:WCC_pas_PC}
  \end{subfigure}
  \hspace{\fill}
  \begin{subfigure}[b]{0.18\textwidth}
    \centering
    \begin{tikzpicture}
      \draw      (2,1)    node{$\bullet$} ;
      \draw      (2,1)    node[above]{$w(1)$} ;

      \draw      (2.5,0.75)    node{$\bullet$} ;
      \draw      (2.5,0.75)    node[below]{$r/(0, 1)$} ;
      \draw[|->] (2.7,0.75) -- (3.3,0.75) ;
      \draw      (3.5,0.75)    node{$\bullet$} ;
      \draw      (3.5,0.75)    node[above]{$w(2)$} ;

      \draw      (4,0.5)    node{$\bullet$} ;
      \draw      (4,0.5)    node[below]{$r/(0, 2)$} ;

      \draw[->,dashed] (2.1,0.95) -- (2.4,0.8) ;
      \draw[->,dashed] (3.6,0.7) -- (3.9,0.55) ;
    \end{tikzpicture}
    \caption{\footnotesize $\mathcal{W}_2$: PC, not WCC}
    \label{fig:PC_pas_WCC}
  \end{subfigure}
  \hspace{\fill}
  \begin{subfigure}[b]{0.17\textwidth}
    \centering
    \begin{tikzpicture}
      \draw      (1,1)      node{$\bullet$} ;
      \draw      (1,1)      node[above]{$w(1)$} ;
      \draw[|->] (1.2,1) -- (2.1,1) ;
      \draw      (2.3,1)    node{$\bullet$} ;
      \draw      (2.3,1)    node[above]{$r/(2,1)$} ;
      
      \draw      (1,0.5)      node{$\bullet$} ;
      \draw      (1,0.5)      node[below]{$w(2)$} ;
      \draw[|->] (1.2,0.5) -- (2.1,0.5) ;
      \draw      (2.3,0.5)    node{$\bullet$} ;
      \draw      (2.3,0.5)    node[below]{$r/(1,2)$} ;

      \draw[->,dashed] (1.2,0.9) -- (2.1,0.6) ;
      \draw[->,dashed] (1.2,0.6) -- (2.1,0.9) ;
    \end{tikzpicture}
    \caption{\footnotesize $\mathcal{W}_2$: CC, not CCv}
    \label{fig:SCC_pas_SC}
  \end{subfigure}
  \hspace{\fill}
  \begin{subfigure}[b]{0.15\textwidth}
    \centering
    \begin{tikzpicture}
      \draw      (1,1)      node{$\bullet$} ;
      \draw      (1,1)      node[above]{$w(1)$} ;
      \draw[|->] (1.2,1) -- (1.9,1) ;
      \draw      (2.1,1)    node{$\bullet$} ;
      \draw      (2.1,1)    node[above]{$r/(0,1)$} ;
      
      \draw      (1,0.5)      node{$\bullet$} ;
      \draw      (1,0.5)      node[below]{$w(2)$} ;
      \draw[|->] (1.2,0.5) -- (1.9,0.5) ;
      \draw      (2.1,0.5)    node{$\bullet$} ;
      \draw      (2.1,0.5)    node[below]{$r/(1,2)$} ;
    \end{tikzpicture}
    \caption{\footnotesize $\mathcal{W}_2$: SC}
    \label{fig:SC}
  \end{subfigure}
  \hspace{\fill}

\vspace{4mm}

  \hspace{\fill}
  \begin{subfigure}[b]{0.28\textwidth}
    \centering
    \begin{tikzpicture}
      \draw      (2.5,1)    node{$\bullet$} ;
      \draw      (2.5,1)    node[above]{$push(1)$} ;
      \draw[|->] (2.7,1) -- (3.05,1) ;
      \draw      (3.25,1)    node{$\bullet$} ;
      \draw      (3.15,1.05)    node[below]{$pop/1$} ;
      \draw[|->] (3.45,1) -- (3.8,1) ;
      \draw      (4,1)      node{$\bullet$} ;
      \draw      (3.9,1)      node[above]{$pop/1$} ;
      \draw[|->] (4.2,1) -- (4.55,1) ;
      \draw      (4.75,1)      node{$\bullet$} ;
      \draw      (5.1,1)      node[above]{$push(3)$} ;

      \draw      (3.75,0.5)      node{$\bullet$} ;
      \draw      (3.75,0.5)      node[below]{$push(2)$} ;
      \draw[|->] (3.95,0.5) -- (4.8,0.5) ;
      \draw      (5,0.5)    node{$\bullet$} ;
      \draw      (5,0.5)    node[below]{$pop/3$} ;
      \draw[|->] (5.2,0.5) -- (5.55,0.5) ;
      \draw      (5.75,0.5)      node{$\bullet$} ;
      \draw      (5.85,0.45)      node[above]{$push(1)$} ;

      \draw[->,dashed] (3.85,0.6) -- (3.9,0.9) ;
      \draw[->,dashed] (4.85,0.9) -- (4.9,0.6) ;
      
    \end{tikzpicture}
    \caption{\footnotesize $\mathcal{Q}$: WCC and PC, not CC}
    \label{fig:file_WCC_PC_pas_CC}
  \end{subfigure}
  \hspace{\fill}
  \begin{subfigure}[b]{0.28\textwidth}
    \centering
    \begin{tikzpicture}
      \draw      (4,1)      node{$\bullet$} ;
      \draw      (3.9,1)      node[above]{$pop/1$} ;
      \draw[|->] (4.2,1) -- (4.8,1) ;
      \draw      (5,1)    node{$\bullet$} ;
      \draw      (5,1)    node[above]{$pop/\bot$} ;

      \draw      (2,0.5)      node{$\bullet$} ;
      \draw      (2,0.5)      node[above]{$push(1)$} ;
      \draw[|->] (2.2,0.5) -- (2.55,0.5) ;
      \draw      (2.75,0.5)    node{$\bullet$} ;
      \draw      (2.65,0.5)    node[below]{$push(2)$} ;
      \draw[|->] (2.95,0.5) -- (3.8,0.5) ;
      \draw      (4,0.5)      node{$\bullet$} ;
      \draw      (3.9,0.5)      node[below]{$pop/1$} ;
      \draw[|->] (4.2,0.5) -- (4.8,0.5) ;
      \draw      (5,0.5)    node{$\bullet$} ;
      \draw      (5,0.5)    node[below]{$pop/\bot$} ;

      \draw[->,dashed] (2.95,0.6) -- (3.8,0.9) ;
      \draw[->,dashed] (4.2,0.9) -- (4.8,0.6) ;
      \draw[->,dashed] (4.2,0.6) -- (4.8,0.9) ;

    \end{tikzpicture}
    \caption{\footnotesize $\mathcal{Q}$: CC, not SC}
    \label{fig:file_2pop1_0pop2}
  \end{subfigure}
  \hspace{\fill}
  \begin{subfigure}[b]{0.39\textwidth}
    \centering
    \begin{tikzpicture}
      \draw      (5,1)      node{$\bullet$} ;
      \draw      (5,1)      node[above]{$hd/1$} ;
      \draw[|->] (5.2,1) -- (5.8,1) ;
      \draw      (6,1)      node{$\bullet$} ;
      \draw      (6,1)      node[above]{$rh(1)$} ;
      \draw[|->] (6.2,1) -- (6.8,1) ;
      \draw      (7,1)      node{$\bullet$} ;
      \draw      (7,1)      node[above]{$hd/2$} ;
      \draw[|->] (7.2,1) -- (7.8,1) ;
      \draw      (8,1)      node{$\bullet$} ;
      \draw      (8,1)      node[above]{$rh(2)$} ;

      \draw      (3,0.5)      node{$\bullet$} ;
      \draw      (3,0.5)      node[above]{$push(1)$} ;
      \draw[|->] (3.2,0.5) -- (3.55,0.5) ;
      \draw      (3.75,0.5)   node{$\bullet$} ;
      \draw      (3.75,0.5)   node[below]{$push(2)$} ;
      \draw[|->] (3.95,0.5) -- (4.8,0.5) ;
      \draw      (5,0.5)      node{$\bullet$} ;
      \draw      (5,0.5)      node[below]{$hd/1$} ;
      \draw[|->] (5.2,0.5) -- (5.8,0.5) ;
      \draw      (6,0.5)      node{$\bullet$} ;
      \draw      (6,0.5)      node[below]{$rh(1)$} ;
      \draw[|->] (6.2,0.5) -- (6.8,0.5) ;
      \draw      (7,0.5)      node{$\bullet$} ;
      \draw      (7,0.5)      node[below]{$hd/2$} ;
      \draw[|->] (7.2,0.5) -- (7.8,0.5) ;
      \draw      (8,0.5)      node{$\bullet$} ;
      \draw      (8,0.5)      node[below]{$rh(2)$} ;

      \draw[->,dashed] (3.95,0.6) -- (4.8,0.9) ;
      \draw[->,dashed] (6.2,0.9) -- (6.8,0.6) ;
      \draw[->,dashed] (6.2,0.6) -- (6.8,0.9) ;

    \end{tikzpicture}
    \caption{\footnotesize $\mathcal{Q}'$: CC, not SC}
    \label{fig:file_2pop1_0pop2_regle}
  \end{subfigure}
  \hspace{\fill}

\vspace{4mm}

  \hspace{\fill}
  \begin{subfigure}[b]{0.35\textwidth}
    \centering
    \begin{tikzpicture}
      \draw      (1,1)    node{$\bullet$} ;
      \draw      (1,1)    node[above]{$w_a(1)$} ;
      \draw[|->] (1.2,1) -- (1.8,1) ;
      \draw      (2,1)    node{$\bullet$} ;
      \draw      (2,1)    node[above]{$w_c(2)$} ;
      \draw[|->] (2.2,1) -- (2.8,1) ;
      \draw      (3,1)    node{$\bullet$} ;
      \draw      (3,1)    node[above]{$w_d(1)$} ;
      \draw[|->] (3.2,1) -- (3.8,1) ;
      \draw      (4,1)    node{$\bullet$} ;
      \draw      (4,1)    node[above]{$r_b/0$} ;
      \draw[|->] (4.2,1) -- (4.8,1) ;
      \draw      (5,1)    node{$\bullet$} ;
      \draw      (5,1)    node[above]{$r_e/1$} ;
      \draw[|->] (5.2,1) -- (5.8,1) ;
      \draw      (6,1)    node{$\bullet$} ;
      \draw      (6,1)    node[above]{$r_c/3$} ;
      
      \draw      (1,0.5)    node{$\bullet$} ;
      \draw      (1,0.5)    node[below]{$w_b(1)$} ;
      \draw[|->] (1.2,0.5) -- (1.8,0.5) ;
      \draw      (2,0.5)    node{$\bullet$} ;
      \draw      (2,0.5)    node[below]{$w_c(3)$} ;
      \draw[|->] (2.2,0.5) -- (2.8,0.5) ;
      \draw      (3,0.5)    node{$\bullet$} ;
      \draw      (3,0.5)    node[below]{$w_e(1)$} ;
      \draw[|->] (3.2,0.5) -- (3.8,0.5) ;
      \draw      (4,0.5)    node{$\bullet$} ;
      \draw      (4,0.5)    node[below]{$r_a/0$} ;
      \draw[|->] (4.2,0.5) -- (4.8,0.5) ;
      \draw      (5,0.5)    node{$\bullet$} ;
      \draw      (5,0.5)    node[below]{$r_d/1$} ;
      \draw[|->] (5.2,0.5) -- (5.8,0.5) ;
      \draw      (6,0.5)    node{$\bullet$} ;
      \draw      (6,0.5)    node[below]{$r_c/3$} ;

      \draw[->,dashed] (3.2,0.6) -- (4.8,0.9) ;
      \draw[->,dashed] (3.2,0.9) -- (4.8,0.6) ;

    \end{tikzpicture}
    \caption{\footnotesize $\mathscr{M}_{[a-z]}$: CCv but not CC}
    \label{fig:Mem_WCC_pas_CC}
  \end{subfigure}
  \hspace{\fill}
  \begin{subfigure}[b]{0.35\textwidth}
    \centering
    \begin{tikzpicture}
      \draw      (1,1)    node{$\bullet$} ;
      \draw      (1,1)    node[above]{$w_a(1)$} ;
      \draw[|->] (1.2,1) -- (1.8,1) ;
      \draw      (2,1)    node{$\bullet$} ;
      \draw      (2,1)    node[above]{$w_a(2)$} ;
      \draw[|->] (2.2,1) -- (2.8,1) ;
      \draw      (3,1)    node{$\bullet$} ;
      \draw      (3,1)    node[above]{$w_b(3)$} ;
      \draw[|->] (3.2,1) -- (3.8,1) ;
      \draw      (4,1)    node{$\bullet$} ;
      \draw      (4,1)    node[above]{$r_d/3$} ;
      \draw[|->] (4.2,1) -- (4.8,1) ;
      \draw      (5,1)    node{$\bullet$} ;
      \draw      (5,1)    node[above]{$r_c/1$} ;
      \draw[|->] (5.2,1) -- (5.8,1) ;
      \draw      (6,1)    node{$\bullet$} ;
      \draw      (6,1)    node[above]{$w_a(1)$} ;
      
      \draw      (1,0.5)    node{$\bullet$} ;
      \draw      (1,0.5)    node[below]{$w_c(1)$} ;
      \draw[|->] (1.2,0.5) -- (1.8,0.5) ;
      \draw      (2,0.5)    node{$\bullet$} ;
      \draw      (2,0.5)    node[below]{$w_c(2)$} ;
      \draw[|->] (2.2,0.5) -- (2.8,0.5) ;
      \draw      (3,0.5)    node{$\bullet$} ;
      \draw      (3,0.5)    node[below]{$w_d(3)$} ;
      \draw[|->] (3.2,0.5) -- (3.8,0.5) ;
      \draw      (4,0.5)    node{$\bullet$} ;
      \draw      (4,0.5)    node[below]{$r_b/3$} ;
      \draw[|->] (4.2,0.5) -- (4.8,0.5) ;
      \draw      (5,0.5)    node{$\bullet$} ;
      \draw      (5,0.5)    node[below]{$r_a/1$} ;
      \draw[|->] (5.2,0.5) -- (5.8,0.5) ;
      \draw      (6,0.5)    node{$\bullet$} ;
      \draw      (6,0.5)    node[below]{$w_c/1$} ;
      
      \draw[->,dashed] (3.2,0.9) -- (3.8,0.6) ;
      \draw[->,dashed] (1.3,0.9) -- (4.7,0.6) ;
      \draw[->,dashed] (3.2,0.6) -- (3.8,0.9) ;
      \draw[->,dashed] (1.3,0.6) -- (4.7,0.9) ;
    \end{tikzpicture}
    \caption{\footnotesize $\mathscr{M}_{[a-z]}$: CM but not CC}
    \label{fig:Mem_CM_pas_CC}
  \end{subfigure}
  \hspace{\fill}
  \caption{Distributed histories for instances of $\mathscr{W}_2$, $\mathscr{Q}$, $\mathscr{Q}'$ and $\mathscr{M}_{[a-z]}$ with different consistency criteria.
  }
  \label{fig:exemples_histoires}
\end{figure*}
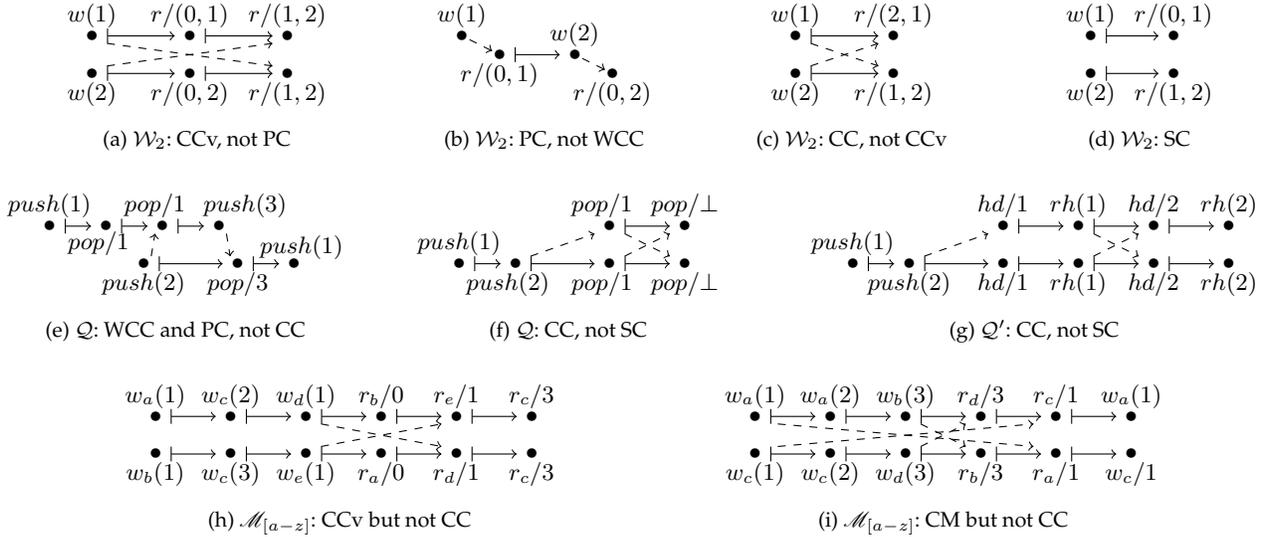

The total order of sequential consistency is a causal order that verifies two additional properties: (1) as the 
causal order is total, the concurrent present of each operation is empty and (2) the value
returned by each operation must be plausible with respect to the linearization of its causal
past (which is unique because of (1)). In our formalism, for all events $e\in E_H$, 
$\lin((H^\rightarrow).\pi(\lfloor e\rfloor, \lfloor e\rfloor))\cap L(T) \neq \emptyset$, where $\rightarrow$ 
is the causal order (Fig. \ref{subfig:CC:cones:SC}). 
Note that the existence of a causal order verifying (2) is equivalent to 
sequential consistency for infinite histories: because any concurrent 
events $e$ and $e'$ have events in common in their respective future, 
a linearization for any of these future events must order $e$ and $e'$, 
so we can build a new causal order in which $e$ and $e'$ are ordered as 
in this linearization (the complete proof is very close to the one for Proposition \ref{lem:CC_to_PC}). 
As processes cannot
know their future, any algorithm implementing (2) must also ensure (1). 
Different flavours of causal consistency that can be implemented in wait-free systems
correspond to different ways to weaken (2), as illustrated in Fig. \ref{fig:CC:cones}.

The differences between the criteria introduced in this paper are illustrated with small examples 
on instances of window streams of size 2 ($\mathcal{W}_2$), of two kinds of
queues ($\mathcal{Q}$ and $\mathcal{Q}'$) and of memory on Fig. \ref{fig:exemples_histoires}.
In these histories, the dummy values returned by update operations are ignored for the sake of clarity.
The program order is represented by solid arrows, and semantic causal relations are represented by dashed arrows 
(a read value is preceded by the corresponding write operation, a popped value needs to be pushed first, etc.).
For example, the history on Fig. \ref{fig:SC} shows two processes sharing a window stream of size 2. 
The first process first writes $1$ and then reads $(0,1)$, while the second process writes $2$ 
and then reads $(1, 2)$. As the word $w(1)/\bot.r/(0,1).w(2)/\bot.r/(1,2)$ 
is in both $\lin(H)$ and $L(\mathcal{W}_2)$, this history is sequentially consistent.

\subsection{Weak causal consistency}

Weak causal consistency precludes the situation where a process is aware of an operation 
done in response to another operation, but not of the initial operation (e.g. a question and the answer in a forum). 
In this scenario, the answer is a consequence of the question, so the
reading of the answer, that is a consequence of the question, should also be a
consequence of the question. Weak causal consistency ensures that, 
when a process performs an operation, it is aware of its whole causal past. 
In terms of time zones, the value returned by each operation
must be consistent with regard to a linearization of the side effect of 
all operations that appear in its causal past -- and only them. 
More formally, it corresponds to Def. \ref{def:weak_causal_consistency}. 
Weak causal consistency roughly corresponds to the notion of causality preservation 
in the CCI model \cite{sun1998achieving} used in collaborative editing, that requires
causality, convergence and intention 
preservation. The difference between weak causal consistency and causality preservation
stems from the fact that the model considered in this paper is based on sequential
specifications that replaces the notion of "intention" of the CCI model. 
\pagebreak

\begin{definition}\label{def:weak_causal_consistency} 
  A history $H$ is \emph{weakly causally consistent} (WCC) for an ADT $T$ if
  there exists a causal order $\rightarrow$, such that:
  $\forall e\in E_H, \lin((H^\rightarrow).\pi(\lfloor e\rfloor, \{e\}))\cap L(T) \neq \emptyset.$
\end{definition}
\vspace{1mm}

In the history on Fig. \ref{fig:PC_pas_WCC}, the operation read $r/(0,1)$ must have $w(1)/\bot$
in its his causal history for the execution to be weak causally consistent. Similarly,
$w(2)/\bot\rightarrow r/(2,1)$. The causal order of this history is total, so it
has only one possible linearization for the last read: $w(1).r.w(2).r/(2,1)$, which does not
conform to the sequential specification, thus the history is not weak causally consistent.
\newpage

On the contrary, the history of Fig. \ref{fig:WCC_pas_PC} is weak causally consistent:
$w(1)/\bot$, $w(1).w(2).r/(0,1)$, $w(1).w(2).r.r/(1,2)$, $w(2)/\bot$, $w(1).w(2).r/(0,2)$
and $w(1).w(2).r.r/(1,2)$ are correct linearizations for the six events. 
This history illustrates why pipelined consistency and eventual consistency cannot be achieved
together for all objects in wait-free message-passing systems \cite{perrin2015update} (all processes but one may crash). 
In a similar execution, a sequentially consistent window stream would verify three properties: 
(termination) all the operations must return; (validity) all the reads must return at least one non-null value; 
(agreement) the oldest value seen by each process must be the same. This problem is similar to
Consensus, that is impossible to solve in asynchronous distributed systems in the presence of process crashes \cite{fischer1985impossibility}. 
In pipelined consistency, for their second read, the first process can only return $(0,1)$ or $(1,2)$ and the second
process only $(0,2)$ or $(2,1)$; they can never converge. Pipelined consistency
sacrifices agreement to ensure termination of the first read, while eventual consistency relaxes termination 
to ensure agreement (the states will eventually be the same, but we do not know when). 

In wait-free distributed systems, pipelined consistency and eventual consistency cannot be achieved together, 
but weak causal consistency can be enriched with either pipelined consistency to form causal consistency (Sec. \ref{sec:cc})
or with eventual consistency to form causal convergence (Sec. \ref{sec:ccv}).

\subsection{Behaviour in absence of data races}

In \cite{ahamad1995causal}, causal memory is justified by the context in which it may be used. 
If a causal memory is never subject to race conditions, it behaves exactly like a sequential memory.
This is actually a property of weak causal consistency : a weakly causally consistent history 
that does not contain concurrent writes is sequentially consistent (Proposition \ref{prop:causal:WCC_W|W}).
Thus, for a program in which synchronisation does not rely on memory, a weakly causally consistent memory
ensures the same quality of service as a sequentially consistent memory with a better time efficiency.
Indeed, concurrent writes need to be synchronized to get a sequentially consistent shared memory \cite{attiya1994sequential},
but it is not necessary for weak causal consistency (see Section \ref{section:implementation}).

\begin{proposition}\label{prop:causal:WCC_W|W}
  Let $T$ be an ADT and $H = (\Sigma, E, \Lambda, \mapsto)$ be a history such that $H \in WCC(T)$ and, for all update operations $u, u' \in E$,
  $u\rightarrow u'$ or $u'\rightarrow u$. Then $H\in SC(T)$.
\end{proposition}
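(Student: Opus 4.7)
The plan is to use the causal order $\rightarrow$ witnessing $H \in WCC(T)$ together with the fact that all updates lie on a single $\rightarrow$-chain to construct one linearization $w \in \lin(H) \cap L(T)$, which yields $H \in SC(T)$. Two general observations feed the construction: (i) the strict $\rightarrow$-past of every event $e$ is finite, since any $e' \ne e$ with $e' \rightarrow e$ must satisfy $e \not\rightarrow e'$ by antisymmetry and therefore lies in the finite set $\{e'' : e \not\rightarrow e''\}$ from Definition~\ref{def:causal_order}; (ii) the updates being totally ordered by $\rightarrow$, each with finite past, can be enumerated as a chain $u_0 \rightarrow u_1 \rightarrow \cdots$ of order type at most $\omega$, and for every event $e$ the set of updates strictly $\rightarrow$-below $e$ is a finite prefix $\{u_0, \dots, u_{m(e)-1}\}$.

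The construction of $w$ inserts each pure query immediately after its maximal update predecessor. Let $Q_k = \{q \text{ pure query} : m(q) = k\}$; because $Q_k \subseteq \{e : u_k \not\rightarrow e\}$ (when $u_k$ exists), it is finite. I would fix a topological sort $\pi_k$ of $Q_k$ with respect to $\rightarrow$ and set
\[
w \;=\; \pi_0 \cdot u_0 \cdot \pi_1 \cdot u_1 \cdot \pi_2 \cdot u_2 \cdots ,
\]
appending, when the chain of updates is finite, a tail obtained by topologically sorting the remaining pure queries into an $\omega$-word (feasible because of (i)). A short case analysis on the four kinds of pairs $(e,e')$ with $e \rightarrow e'$ (both updates, update/pure query, pure query/update, both pure queries) shows that $w$ respects $\rightarrow$ and thus $\mapsto$, so $w \in \lin(H)$.

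To check $w \in L(T)$, the key observation is that pure queries do not alter the abstract state, so running $w$ through $T$ reaches, just before any event $e$, exactly the state $\delta^\ast(q_0, u_0 u_1 \cdots u_{m(e)-1})$. The WCC hypothesis supplies, for each $e$, a sequence $L_e$ in $\lin(H^\rightarrow.\pi(\lfloor e\rfloor,\{e\})) \cap L(T)$. Totality of $\rightarrow$ on updates forces the update subsequence of $L_e$ to be $u_0, \dots, u_{m(e)-1}$ (with $e$ itself appended when $e$ is an update), and pure queries again leave the state unchanged; consequently the state reached just before $e$ in $L_e$ is the same as in $w$, and the output of $e$ certified by $L_e$ is precisely the one that $w$ needs. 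Since $\delta$ and $\lambda$ are total, any finite prefix reached by $w$ can also be extended to an infinite recognized sequence, so $w \in L(T)$.

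The main obstacle is purely bookkeeping for the infinite case: verifying that every $Q_k$ is indeed finite so $w$ is a genuine $\omega$-word, constructing the tail topological sort when the update chain terminates, and treating mixed update/query operations uniformly by placing them as updates $u_k$ and invoking the WCC witness $L_{u_k}$ to validate their output.
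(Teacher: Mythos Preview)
Your argument is correct and rests on the same core insight as the paper: since the updates are totally $\rightarrow$-ordered and pure queries leave the state unchanged, any $\rightarrow$-compatible linearization visits the same abstract state before each event as the WCC witness $L_e$ does, so the outputs agree.

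The execution differs, however. The paper does not build a bespoke linearization via the blocks $\pi_0\cdot u_0\cdot\pi_1\cdot u_1\cdots$; it simply takes \emph{any} total order $\le$ extending $\rightarrow$ (the cofiniteness clause in the definition of a causal order guarantees every $\le$-initial segment is finite, so this automatically yields a word in $\lin(H)$), and then argues by contradiction: if the resulting word $l$ were not in $L(T)$, the shortest bad prefix $l'\cdot\Lambda(e)$ would contain the same updates in the same order as the WCC witness $l''\cdot\Lambda(e)$, hence reach the same state, contradicting $l''\cdot\Lambda(e)\in L(T)$. This buys the paper a much shorter proof with no case analysis on event types and no finiteness bookkeeping for the $Q_k$. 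Your constructive route, on the other hand, makes the structure of the witnessing linearization completely explicit and avoids proof by contradiction; it also isolates cleanly where the ``mixed update/query'' events sit (namely, as some $u_k$ whose output is validated by $L_{u_k}$), which the paper's proof leaves implicit.
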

\begin{proof}
  Let $T$ be an ADT and $H = (\Sigma, E, \Lambda, \mapsto) \in WCC(T)$ such that, for all update operations $u, u' \in E$,
  $u\rightarrow u'$ or $u'\rightarrow u$. 

  Let $\le$ be a total order on $E$ that extends $\rightarrow$, and let $l$ be the unique linearization
  of $\lin(H^\le)$. As $\mapsto\subset\rightarrow\le$, $l\in \lin(H)$.
  Suppose that $l\notin L(T)$. As the transition system of $T$ is deterministic, there exists a finite prefix of $l$
  that does not belong to $L(T)$. Let $l'\in \Sigma^\star$ and $e\in E$ such that $l'\cdot\Lambda(e)$ is the shortest such prefix.
  As $H\in WCC(T)$, there exists a linearization  $l''\cdot\Lambda(e) \in \lin((H^\rightarrow).\pi(\lfloor e\rfloor, \{e\}))\cap L(T)$. 
  as $e$ is the maximum of $\lfloor e\rfloor$ according to $e$.
  Now, $l'$ and $l''$ are composed of the same updates in the same order, as $\rightarrow$ is total considering only the updates, 
  so $l'$ and $l''$ lead to the same state. Under these conditions, it is absurd that $l''\cdot\Lambda(e) \in L(T)$,
  $l' \in L(T)$ and $l'\cdot\Lambda(e) \notin L(T)$. It results that $l\in L(T)$, so $H\in SC(T)$.
\end{proof}

\section{Causal consistency} \label{sec:cc}

\subsection{Definition}

Among the four session guarantees, weak causal consistency and causal convergence
ensure \emph{Read your writes}, \emph{Monotonic writes} and \emph{Writes follows reads}, 
but not \emph{Monotonic reads} while causal consistency is supposed to ensure the four session guarantees. 
The difference between pipelined consistency and weak causal consistency 
can be understood in terms of the time zones illustrated on 
Fig. \ref{fig:CC:cones}. On the one hand, in pipelined consistency, the present must be consistent with the 
whole program past, writes as well as reads, and the writes of a prefix of the other processes, 
but there is no reference to a causal order (Fig. \ref{subfig:CC:cones:PC}). 
On the other hand, weak causal consistency focuses on causal order, but only requires consistency with the writes (Fig. \ref{subfig:CC:cones:WCC}). 
Causal consistency enforces both weak causal consistency and pipelined consistency by considering differently the 
program past and the rest of the causal past: 
the value returned by each read must respect a linearization containing all the writes of the causal history and
the reads of its program history (Fig. \ref{subfig:CC:cones:CC}). More formally, it corresponds to Def. \ref{def:causal_consistency}.

\begin{definition}\label{def:causal_consistency} 
  A history $H$ is \emph{causally consistent} (CC) for an ADT $T\in \mathscr{T}$ if
  there exists a causal order $\rightarrow$ such that:\linebreak
  $\forall p \in \mathscr{P}_H, \forall e\in p, \lin((H^\rightarrow).\pi(\lfloor e\rfloor, p))\cap L(T) \neq \emptyset.$
\end{definition}

As causal consistency is a strengthening of both pipelined consistency and weak causal consistency, 
the histories of figures \ref{fig:WCC_pas_PC} and \ref{fig:PC_pas_WCC} are not causally consistent.
On the contrary, the history of Fig. \ref{fig:SCC_pas_SC} is causally consistent:
$w(1)/\bot$, $w(2).w(1)/\bot.r/(2,1)$, $w(2)/\bot$ and $w(2)/\bot.w(1).r/(1,2)$ are linearizations 
for the four events.

Causal consistency is more than the exact addition of pipelined consistency and weak causal consistency as shown by 
Fig. \ref{fig:file_WCC_PC_pas_CC} that features a first-in-first-out queue. Several kinds of queues are 
instantiated in this paper, so their corresponding ADTs are only informally described.
In this history, the queue has two operations $push(v)$ that adds an integer value $v$ at the end of the queue, 
and $pop$ that removes and returns the first element, i.e. the oldest element pushed and not popped yet. 
This history can be interpreted for weak causal consistency: when the first process pops for the first time, it is only aware of 
its own push, so it returns $1$. When it receives the notification for the $push(2)$ operation,
it notices that value $2$ should be before value $1$ in the queue, so the first pop should have returned $2$, and
the second $1$. The linearization $push(2).push(1).pop.pop/1$ is correct for weak causal consistency. 
It is also pipelined consistent, as $push(2).pop.push(1).push(1)/\bot.pop/1.pop/1.push(3)/\bot$ and
$push(2)/\bot.push(1).pop.pop.push(3).pop/3.push(1)/\bot$ are linearizations for the two processes.
Note that the $1$ returned by the second $pop$ does not correspond to the same $push(1)$ for the two criteria.
That is why, even if the history is both pipelined consistent and weakly causally consistent, it is not causally consistent.

The history on Fig. \ref{fig:file_2pop1_0pop2} is causally consistent: both processes concurrently $pop$ the 
queue when in same state $[1, 2]$, so they both get $1$. Then they integrate the fact that the other process
removed the head, which they consider is the value $2$; at their next $pop$, the queue is empty. Weakly consistent criteria
cannot ensure that all elements inserted will be popped once and only once even if an infinity of pops are performed, 
but, this example shows that causal consistency, neither guarantees existence (2 is never popped) nor unicity (1 is popped twice). 
The reason is that, in weak consistency criteria, the transition and output parts of the operations are loosely coupled.
In Fig. \ref{fig:file_2pop1_0pop2_regle}, the pop operation is split into a $hd$ (head) operation, that returns the first element 
without removing it, and a $rh(v)$ (remove head) operation that removes the head if and only if it is equal to $v$. 
The previous pattern also may happen and both processes read $1$ and perform $rh(1)$.
However, they do not delete $2$ at the head of the queue. Using this technique, all the
values are read at least once.

The fact that causal consistency is stronger than pipelined consistency is not trivial
given the definitions: the existence of linearizations for all the events does not directly
imply the existence of a linearization for the whole history. We prove the following
proposition, that will be useful in Section \ref{sec:causal_memory}. The fact that $CC$ is
stronger than $PC$ is a direct corollary, as ${\mapsto} \subset {\rightarrow}$.

\begin{proposition}\label{lem:CC_to_PC}
  If $H$ is a causally consistent history, then \linebreak 
  $\forall p\in \mathscr{P}_H, \lin\left((H^{\rightarrow}).\pi(E_H, p)\right)\cap L(T) \neq \emptyset.$
\end{proposition}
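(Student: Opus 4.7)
The plan is to glue the per-event linearizations $L_e\in\lin((H^\rightarrow).\pi(\lfloor e\rfloor,p))\cap L(T)$ supplied by causal consistency into a single global linearization of $E_H$. If the chain $p$ is finite, let $e^\star$ be its program-order maximum; then $p\subseteq\lfloor e^\star\rfloor$, so every event outside $\lfloor e^\star\rfloor$ lies outside $p$ and has its output hidden by $\pi(\cdot,p)$. I would take $L_{e^\star}$ and then append the remaining events in any $\rightarrow$-respecting order: totality of $\delta$ ensures that a sequence of input-only operations extends any word of $L(T)$ while remaining in $L(T)$, and $\rightarrow$-downward-closure of $\lfloor e^\star\rfloor$ guarantees that no appended event is a $\rightarrow$-predecessor of anything in $L_{e^\star}$.

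If $p$ is infinite, enumerate it in program order as $p_0\mapsto p_1\mapsto\cdots$ and apply K\"onig's lemma to the tree $\mathcal{T}$ whose depth-$k$ nodes are sequences $(x_0,\ldots,x_{k-1})$ of distinct events of $E_H$ such that $\{x_0,\ldots,x_{k-1}\}$ is $\rightarrow$-downward-closed, $x_j\not\rightarrow x_i$ whenever $i<j$, and the labels taken under $\pi(\cdot,p)$ form a prefix of some word of $L(T)$. Each $L_{p_n}$ is then a node of depth $|\lfloor p_n\rfloor|$; cofiniteness applied to any event $f\in E_H$ yields $f\rightarrow p_n$ for almost every $n$, so $\bigcup_n\lfloor p_n\rfloor=E_H$, the depths grow without bound, and $\mathcal{T}$ is infinite.

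The subtle step, and the main obstacle, is verifying that $\mathcal{T}$ is finitely branching. A child of a node at depth $k$ must pick an $x_k$ that is $\rightarrow$-minimal in $E_H\setminus\{x_0,\ldots,x_{k-1}\}$, and such minimal elements form an antichain of $\rightarrow$. The cofiniteness clause of Def.~\ref{def:causal_order} forbids infinite antichains: if $m_1,m_2,\ldots$ were pairwise incomparable, then every $m_j$ with $j\ge 2$ would lie in the finite set $\{e':m_1\not\rightarrow e'\}$, a contradiction. K\"onig's lemma then yields an infinite branch $(x_i)_{i\in\mathbb{N}}$. This sequence enumerates $E_H$, because for any $f\in E_H$ cofiniteness gives $f\rightarrow x_i$ for almost every $i$, and $\rightarrow$-downward-closure of each initial segment forces $f$ to appear at some finite index. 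Finally, since every finite prefix of the label word lies in $L(T)$ and the transducer of $T$ is deterministic, the whole infinite word lies in $L(T)$ as required.
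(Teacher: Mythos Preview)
Your proof is correct and follows the same strategy as the paper's: the finite case is identical, and the infinite case is a K\"onig-lemma argument whose finite-branching step rests on the cofiniteness clause of the causal order. The only difference is organizational---the paper indexes its inverse system by the events $e_k$ of $p$ (taking, for each $k$, the finite set of linearizations of past-plus-concurrent-present truncated at $e_k$, and observing that each level refines the previous one), whereas you build the full prefix tree of $\rightarrow$-downward-closed, $L(T)$-compatible finite sequences and invoke K\"onig directly.
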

\begin{proof}
  Let $H$ be causally consistent and $p\in \mathscr{P}_H$. 
  If $p$ is finite, it has a biggest element $e$. As $H$ is causally consistent,
  there exists a linearization  $l_e \in \lin((H^\rightarrow).\pi(\lfloor e\rfloor, p))\cap L(T)$.
  As $\mapsto \subset \rightarrow$, there exists a linearization $l$ of $(H^{\rightarrow}).\pi(E_H, p)$ whose $l_e$ is a prefix. 
  $l\in L(T)$ as $l_e\in L(T)$ and all the events that are in $l$ and not in $l_e$ are hidden.

  If $p$ is infinite, it is not possible to consider its last element. Instead, we build a growing sequence $(l_k)$
  of linearizations that converges to the whole history. The successive linearizations of the events are not
  necessarily prefixes of each other, so the linearizations we build also contain 
  a part of the concurrent present. We number the events of $p$ by $e_1\mapsto e_2 \mapsto ...$ and we define, 
  for all $k$, the set $L_k$ such that $l.e_k\in L_k$ if and only if it can be completed, by a word $l'$ such that \linebreak
  $l.e_k.l'\in \lin((H^\rightarrow).\pi(\{e\in E_H : e_k \not\rightarrow e\}, p))\cap L(T)$. In other words, 
  $L_k$ contains the linearizations of the causal past and the concurrent present of $e_k$, truncated to $e_k$.
  As $L_k$ contains the correct linearizations for causal consistency, it is not empty. It is also finite because
  $\rightarrow$ is a causal order, so $E_H\setminus \{e\in E_H : e_k \rightarrow e\}$ is finite. 
  Notice that all the linearizations in $L_{k+1}$ have a prefix in $L_k$ as $e_k\rightarrow e_{k+1}$ and $L(T)$ 
  is closed by prefixing.

  As $L_k$ is finite for all $k$ and all $l_j\in L_j$ has a prefix in $L_k$ for $j\ge k$, there is a $l_k$ that
  is the prefix of a $l_j$ for all $j\ge k$. We can build by induction a sequence $(l_k)_{k\in \mathbb{N}}$ of words of $L(T)$ such that for all $k$, $l_k\in L_k$
  and $l_{k+1}$ is a prefix of $l_k$. The sequence $(l_k)$ converges to an infinite word $l$.
  All the prefixes of $l$ are in $L(T)$, so $l\in L(T)$. Moreover, $l$ contains all the events of $H.\pi(E_H, p)$ because
  $\rightarrow$ is a causal order (so all events are in the causal history of a $e_k$ for some $k$), and 
  the causal order is respected for each pair of events, because it is respected by all the prefixes of $l$
  that contain those two events. Finally, $l\in \lin\left((H^{\rightarrow}).\pi(E_H, p)\right)\cap L(T)$.
\end{proof}

\subsection{Causal consistency versus causal memory}\label{sec:causal_memory}

Memory is a particular abstract data type; as an example, causal memory has been defined in \cite{ahamad1995causal}.
In this section, we compare causal consistency applied to memory and causal memory. We first 
recall the formal definitions of memory and causal memory, then we exhibit a difference between the two associated consistency criteria when the same value is written twice in the same register (a register being a piece of memory).
We finally prove that, when all the values written are different, causally consistent memory corresponds exactly
to causal memory.

\pagebreak
We now define memory as an abstract data type. A memory is a pool of integer registers. 
As causal consistency is not composable, it is important to define a causal memory 
as a causally consistent pool of registers rather than a pool of causally consistent registers, which is very different.
An integer \emph{register} $x$ is isomorphic to a window stream of size 1. 
It can be accessed by a \emph{write} operation $w_x(v)$, where $v\in \mathbb{N}$ is the written value and 
a \emph{read} operation $r_x$ that returns the last value written, if there is one, 
or the default value $0$ otherwise. The integer \emph{memory} $\mathcal{M}_X$ is 
the collection of the integer registers of $X$. More formally, it corresponds
to the ADT given in Def. \ref{def:mem}. 
In all the section, let $\mathcal{M}_X$ be a memory abstract data type. 

\begin{definition}\label{def:mem}
  Let $X$ be any set of symbolic \emph{register names}. We define the integer \emph{memory} on $X$ by the ADT \linebreak
  $\mathcal{M}_x = (\Sigma_i, \Sigma_0, Q, q_0, \delta, \lambda)$ with $Q = \mathbb{N}^X$, $q_0 : x\mapsto 0$,
  $\Sigma_i = \{r_x, w_x(v) : v\in \mathbb{N}, x\in X\}$, 
  $\Sigma_o = \mathbb{N} \cup \{\bot\}$,  
  and for all $x\neq y\in X$, $v\in \mathbb{N}$ and $q\in X\rightarrow\mathbb{N}$,
  $\delta(q, w_x(v))(x) = v$, $\delta(q, w_x(v))(y) = q(y)$, $\lambda(q, w_x(v)) = \bot$, $\delta(q, r_x) = q$ and $\lambda(q, r_x) = q(x)$.
\end{definition}

The dichotomy between causal consistency and causal convergence also exists for memory. 
On Fig. \ref{fig:Mem_WCC_pas_CC}, assuming the first read of each process only has the writes of
the same process in their causal past, all the writes of the other processes must be placed after 
this read. In order to satisfy causal consistency, the register $c$ must be set to $3$ for the 
first register and to $2$ for the second register in the end, which cannot be reconciled with 
causal convergence.

Causal memory defines a causal order explicitly from the history by considering the reads and the writes. 
This causal order has the same use as the program order in pipelined consistency. More formally,
it corresponds to Def. \ref{def:CM}.

\begin{definition}\label{def:CM}
  A relation $\rightsquigarrow$ is a writes-into order if:
  \begin{itemize}
  \item for all $e, e'\in E_H$ such that $e \rightsquigarrow e'$, there are $x\in X$ and $v\in \mathbb{N}$ such that
    $\Lambda(e) = \mathrm{w}_x(v)$ and $\Lambda(e') = \mathrm{r}_x/v$,
  \item for all $e\in E_H$, $|\{e'\in E_H : e' \rightsquigarrow e\}| \le 1$,
  \item for all $e\in E_H$ such that $\Lambda(e) = \mathrm{r}_x/v$ and there is no $e'\in E_H$ such that $e'\rightsquigarrow e$,
    then $v = 0$.
  \end{itemize}
  A history $H$ is $\mathcal{M}_X$-\emph{causal} (CM) if there exists a writes-into order $\rightsquigarrow$ such that:
  \begin{itemize}
  \item there is a causal order $\rightarrow$ that contains $\rightsquigarrow$ and $\mapsto$,
  \item $\forall p\in \mathscr{P}_H, \lin\left((H^\rightarrow).\pi(E_H, p)\right) \cap L(\mathcal{M}_X) \neq \emptyset$.
  \end{itemize}
\end{definition}

Causal consistency and causal memory are not identical. 
This comes from the fact that the writes-into order is not unique.
This weakens the role of the logical time, as the intuition that a read must be bound to its 
corresponding write is not always captured by the definition.
Let us illustrate this point with the history on Fig. \ref{fig:Mem_CM_pas_CC}.
In this history, we consider the writes-into order in which the reads on $x$ and $z$ 
are related to the first write of the other process. This writes-into order is correct,
as each read is related to exactly one write, and the registers and the values are the same. Moreover, 
the linearizations $w_a(1)/\bot.w_a(2)/\bot.w_b(3)/\bot.w_c(1).w_c(2).w_d(3).$\linebreak $r_d/3.r_b.r_a.w_c(1).r_c/1.w_a(1)/\bot$
and $w_a(1).w_a(2).w_b(3).$\linebreak $w_c(1)/\bot.w_c(2)/\bot.w_d(3)/\bot.r_b/3.r_d.r_c.w_a(1).r_a/1.w_c(1)/\bot$ for the two processes
are correct, so this history is correct for causal memory.
However, in these linearizations, the value read by the two last reads was not written by their predecessors 
in the writes-into relation. If we change this relation to restore the real data dependencies, we obtain a cycle
in the causal order. This example shows that the approach of Def. \ref{def:CM}, that uses the semantics of the operations,
is not well suited to define the consistency criteria. 

This issue is usually solved \cite{misra1986axioms} by the hypothesis that all written values are distinct.
Even if this can be achieved by the addition of unique timestamps on the values stored in the memory, this solution 
is not acceptable because it changes the way the final object can be used. We now prove that, under this hypothesis,
causal consistency and causal memory are equal. It means that causal consistency solves the problem raised above,
while remaining as close as possible to the original criterion.

\begin{proposition}
  Let $H$ be a distributed history.  If $H \in CC(\mathcal{M}_X)$, then $H$ is $\mathcal{M}_X$-causal.
\end{proposition}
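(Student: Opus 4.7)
The plan is to reuse the causal order $\rightarrow$ witnessing $H\in CC(\mathcal{M}_X)$ and to carve out a writes-into order $\rightsquigarrow$ from the per-event linearizations supplied by causal consistency. The linearization condition of Def.~\ref{def:CM} will then come for free via Proposition~\ref{lem:CC_to_PC}.

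\textbf{Step 1: extract $\rightsquigarrow$.} Fix a causal order $\rightarrow$ witnessing $H\in CC(\mathcal{M}_X)$. For each event $e\in E_H$ with $\Lambda(e) = r_x/v$, let $p\in\mathscr{P}_H$ be a process containing $e$, and pick a linearization $l_e \in \lin((H^{\rightarrow}).\pi(\lfloor e\rfloor, p))\cap L(\mathcal{M}_X)$ (which exists by the $CC$ hypothesis). Because $e\in p$, its output $v$ is visible in $l_e$, so $l_e\in L(\mathcal{M}_X)$ forces the last input of the form $w_x(\cdot)$ preceding $e$ in $l_e$ to be $w_x(v)$ when $v\neq 0$, and forces no $w_x$ to precede $e$ in $l_e$ when $v=0$. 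Let $w(e)$ denote that write when $v\neq 0$, and declare $e' \rightsquigarrow e$ iff $e' = w(e)$.

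\textbf{Step 2: check that $\rightsquigarrow$ is a writes-into order.} The three conditions of Def.~\ref{def:CM} are immediate from the construction: (a) whenever $e'\rightsquigarrow e$, $\Lambda(e')=w_x(v)$ and $\Lambda(e)=r_x/v$ for matching $x,v$; (b) each read $e$ has at most one $\rightsquigarrow$-predecessor, namely $w(e)$; (c) if no $e'\rightsquigarrow e$, then by construction $v=0$. Moreover, $w(e)$ appears in $l_e$, which is a linearization of $\pi(\lfloor e\rfloor,\cdot)$, so $w(e)\in\lfloor e\rfloor$, i.e.\ $w(e)\rightarrow e$. Hence $\rightsquigarrow\ \subseteq\ \rightarrow$, and since $\rightarrow$ is a causal order already containing $\mapsto$, the very same $\rightarrow$ serves as the causal order required by Def.~\ref{def:CM} that contains both $\rightsquigarrow$ and $\mapsto$.

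\textbf{Step 3: recover the per-process linearizations.} The remaining obligation is the existence, for each process $p$, of a linearization of $(H^\rightarrow).\pi(E_H, p)$ that belongs to $L(\mathcal{M}_X)$. This is exactly the content of Proposition~\ref{lem:CC_to_PC} applied to the abstract data type $\mathcal{M}_X$, so we can invoke it directly, yielding $H$ is $\mathcal{M}_X$-causal.

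The only delicate point is Step~1: one must be careful that the write identified as $w(e)$ really lies in $\lfloor e\rfloor$ (not just somewhere in $H$), but the projection $\pi(\lfloor e\rfloor, p)$ in the $CC$ definition guarantees this automatically, so no extra work is needed. Everything else reduces to elementary bookkeeping once Proposition~\ref{lem:CC_to_PC} is in hand.
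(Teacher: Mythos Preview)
Your proof is correct and follows essentially the same route as the paper: extract the writes-into relation from the per-event $CC$ linearizations, observe that it lies inside the given causal order $\rightarrow$ (so $\rightarrow$ itself serves as the causal order required by Def.~\ref{def:CM}), and then invoke Proposition~\ref{lem:CC_to_PC} for the per-process linearizations. One small inaccuracy that does not affect the argument: when $v=0$ it is not true that ``no $w_x$ precedes $e$ in $l_e$''---there could be a $w_x(0)$---but since you only define $w(e)$ for $v\neq 0$, condition~(c) of Def.~\ref{def:CM} still holds as written.
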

\begin{proof}
  Suppose $H$ is causally consistent. For each event $e$, there exists a process $p_e$ with $e\in p_e$ 
  and a linearization $l_e \in \lin((H^\rightarrow).\pi(\lfloor e\rfloor, p_e))\cap L(\mathcal{M}_X)$.
  Note that these processes and linearizations are not necessarily unique, but we fix them for each event now. 
  Let us define the writes-into order $\rightsquigarrow$ by $e\rightsquigarrow e'$ if $\Lambda(e') = \mathrm{r}_x/v$
  and $e$ is the last write on $x$ in $l_e$. As $l_e\in L(\mathcal{M}_X)$, $\Lambda(e) = \mathrm{w}_x(v)$.
  $e'$ also has at most one antecedent by $\rightsquigarrow$, and if it has none, then $v=0$. 
  The transitive closure $\xrightarrow{CM}$ of $\rightsquigarrow \cup \mapsto$ is a partial order contained into
  $\rightarrow$. By Proposition \ref{lem:CC_to_PC}, for all $p\in \mathscr{P}_H$,
  $\lin\left((H^\rightarrow).\pi(E_H, p)\right) \cap L(\mathcal{M}_X) \neq \emptyset$, so  
  $H$ is $\mathcal{M}_X$-causal.
\end{proof}

\begin{proposition}
  Let $H$ be a distributed history such that, for all $e\neq e'\in E_H$ with $\Lambda(e) = w_x(v)/\bot$ and 
  $\Lambda(e') = w_y(v')/\bot$, $(x,v) \neq (y,v')$. If $H$ is $\mathcal{M}_X$-causal, then $H \in CC(\mathcal{M}_X)$.
\end{proposition}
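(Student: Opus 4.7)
The plan is to reuse the witnesses from the $\mathcal{M}_X$-causal assumption---a writes-into order $\rightsquigarrow$, a causal order $\rightarrow$ with $\rightsquigarrow \cup \mapsto \subseteq \rightarrow$, and for each $p\in \mathscr{P}_H$ a linearization $l_p \in \lin((H^\rightarrow).\pi(E_H, p)) \cap L(\mathcal{M}_X)$---and, for each process $p$ and event $e\in p$, to exhibit the word $l_{p,e}$ obtained from $l_p$ by keeping only the events of $\lfloor e\rfloor$ as the CC witness. Because $l_p$ extends $\rightarrow$, $l_{p,e}$ automatically linearizes $(H^\rightarrow).\pi(\lfloor e\rfloor, p)$, so the real content is to show $l_{p,e}\in L(\mathcal{M}_X)$.

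The only non-trivial constraints come from the visible reads $r\in p\cap \lfloor e\rfloor$. For such a read labelled $r_x/v$, the membership $l_p\in L(\mathcal{M}_X)$ forces the last write to $x$ preceding $r$ in $l_p$ to write value $v$ (or to be absent, if $v=0$ and no such write exists in $H$); by the uniqueness of $(x,v)$-writes, that last write is the unique event $w$ with $\Lambda(w) = w_x(v)$. The heart of the argument is to secure $w\in \lfloor e\rfloor$: once this holds, $w$ survives the restriction, no fresh write to $x$ can slip between $w$ and $r$ in $l_{p,e}$ (the restriction only deletes events), and therefore $r$ again returns $v$. For $v\neq 0$, the third writes-into axiom of Def.~\ref{def:CM} produces some $w'\rightsquigarrow r$; uniqueness identifies $w'=w$, and $\rightsquigarrow\subseteq\rightarrow$ yields $w\in \lfloor r\rfloor\subseteq \lfloor e\rfloor$ directly.

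The main obstacle is the boundary case $v=0$ with the unique $w_x(0)$ present in $H$, where the writes-into axioms do not force $w\rightsquigarrow r$ and $w$ may escape $\lfloor e\rfloor$. I would handle it by pre-processing the CM witnesses: refine $\rightsquigarrow$ to also include $w\rightsquigarrow r$ for every visible read $r$ of $0$ on $x$ whose unique candidate writer $w$ precedes $r$ in $l_{p(r)}$, and enlarge $\rightarrow$ accordingly. Uniqueness guarantees that the three writes-into axioms still hold, and each $l_p$ is repaired to respect the enlarged order by sliding every affected hidden read behind its newly assigned writer---an operation that preserves $L(\mathcal{M}_X)$ because pure queries leave the state unchanged. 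The delicate point, and the reason this step is more than bookkeeping, is to verify that the transitive closure of the enriched relation contains no cycle: any putative cycle must use at least one added edge, and one must show that reconstructing the cycle from the orientation of the added edges inside the various linearizations $l_p$ contradicts the fact that every $l_p$ extends the common $\mapsto$ and is in $L(\mathcal{M}_X)$. Once the enriched causal order is in place, $w\in \lfloor e\rfloor$ in every case, the preceding analysis gives $l_{p,e}\in L(\mathcal{M}_X)$, and hence $H\in CC(\mathcal{M}_X)$.
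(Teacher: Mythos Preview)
Your core strategy---take the CM causal order $\rightarrow$ together with the per-process linearizations $l_p$, and for each $e\in p$ use the $\le_p$-restriction to $\lfloor e\rfloor$ as the CC witness, invoking uniqueness of $(x,v)$-writes to identify each $\rightsquigarrow$-antecedent with the last write to $x$ before the read in $l_p$---is exactly the paper's proof. You go further than the paper in one respect: you isolate the boundary case of a visible read $r_x/0$ with no $\rightsquigarrow$-antecedent while a (unique) event labelled $w_x(0)$ is present in $H$. The paper's proof merely records ``$v=0$'' and passes to the ``Otherwise'' branch without saying why the restricted word is still in $L(\mathcal{M}_X)$ in that case, so here you are being more careful than the source.

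The genuine gap is in your fix for that boundary case. Enriching $\rightsquigarrow$ by ``$w_x(0)\rightsquigarrow r$ whenever the unique $w_x(0)$ precedes $r$ in $l_{p(r)}$'' need not stay acyclic, and the argument you sketch---each $l_p$ extends the common $\mapsto$ and lies in $L(\mathcal{M}_X)$---does not rule cycles out, because the added edges are justified by \emph{different} linearizations that need not agree with one another. Concretely, take registers $a,b$ and processes $p_1: r_a/0\mapsto w_b(0)$ and $p_2: r_b/0\mapsto w_a(0)$. With the legitimate CM witnesses $\rightsquigarrow=\emptyset$, $\rightarrow={\mapsto}$, $l_{p_1}=r_b\cdot w_a(0)\cdot r_a/0\cdot w_b(0)/\bot$ and $l_{p_2}=r_a\cdot w_b(0)\cdot r_b/0\cdot w_a(0)/\bot$, your rule adds both $w_a(0)\to r_a/0$ (from $l_{p_1}$) and $w_b(0)\to r_b/0$ (from $l_{p_2}$), producing the cycle $w_a(0)\to r_a/0\mapsto w_b(0)\to r_b/0\mapsto w_a(0)$; the subsequent ``slide hidden reads'' repair then has no chance, since no linear order extends a cyclic relation. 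So the pre-processing step, as you state it, can fail; closing the $v=0$ case requires a different manoeuvre than blindly enriching whatever CM witnesses one is handed.
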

\begin{proof}
  Suppose that $H$ is $\mathcal{M}_X$-causal. $\xrightarrow{CM}$ is a causal order.
  Let $p\in \mathscr{P}_H$ and $e\in p$. There exists a linearization 
  $l_p \in \lin((H^{\xrightarrow{CM}}).\pi(E_H, p)) \cap L(\mathcal{M}_X)$, associated with a total order of the events 
  $\le_p$. 
  Let $l_e$ be the unique linearization of $\lin((H^{\le_p}).\pi(\lfloor e\rfloor), p)$. 
  Let $e'\in \lfloor e\rfloor$ labelled by $r_x/v$. If $e'$ has no antecedent in the writes-into order, $v=0$. Otherwise, 
  this antecedent $e''$ is the last write on $x$ before $e'$ in $l_p$, because it is the only event 
  labelled $w_x(v)$ in the whole history. As $e'' \xrightarrow{CM} e'$, it is also the last write on $x$ before $e'$ in $l_e$.
  All in all, $l_e \in L(\mathcal{M}_X)$ and $H$ is causally consistent.
\end{proof}

\section{Causal convergence} \label{sec:ccv}

\subsection{Definition}

Eventual consistency \cite{vogels2008eventually} requires that, if at one point, 
all the processes stop doing updates (i.e. operations with a side effect), 
then eventually, all local copies of the object will converge to a common state.

Causal convergence assumes weak causal consistency and eventual consistency. It strengthens weak causal consistency by imposing that the linearizations obtained for all the events correspond to the same total order. Consequently, in causal convergence, the updates are totally ordered and the state read by each operation is the result of
the updates in its causal past, ordered by this common total order. Thus, two operations with the 
same causal past are done in the same state.

\begin{definition}\label{def:causal_convergence} 
  A history $H$ is \emph{causally convergent} (CCv) for an ADT $T$ if there exists a causal 
  order $\rightarrow$, and a total order $\le$ that contains $\rightarrow$ such that:\\
  $\forall e\in E_H, \lin((H^\le).\pi(\lfloor e\rfloor, \{e\}))\cap L(T) \neq \emptyset.$
\end{definition}

\pagebreak 
The history on Fig. \ref{fig:WCC_pas_PC} is causally convergent: the causal order and the
linearizations introduced in Section \ref{section:weak_causal_consistency} could be obtained considering any total
order $\le$ in which \linebreak $w(1)/\bot \le w(2)/\bot$. The history on Fig. \ref{fig:SCC_pas_SC}, yet, 
is not causally convergent: both writes must be in the causal past of both
reads as both values are read, but they were not applied in the same order.

A consistency criterion called strong update consistency has been
introduced in \cite{perrin2015update} as a strengthening of both update
consistency and strong eventual consistency \cite{BGYZ14},
that both strengthen eventual consistency. It is interesting to observe that 
causal convergence is stronger than strong update consistency, as it 
imposes to the visibility relation to be a transitive causal order. In 
other words, there is the same relation between strong update 
consistency and causal convergence as between pipelined consistency and 
causal consistency.

\subsection{Behaviour in absence of data races}

Because causal convergence is stronger than weak causal consistency, Proposition \ref{prop:causal:WCC_W|W} also applies to it. 
Besides it, there is another situation in which causal convergence behaves like sequential consistency: Proposition \ref{prop:causal:CCv_W|R}
proves that causally convergent histories in which no updates happen concurrently to queries are also sequentially consistent. 

\begin{proposition}\label{prop:causal:CCv_W|R}
  Let $T$ be an ADT and $H = (\Sigma, E, \Lambda, \mapsto)$ be a concurrent history such that $H \in WCC(T)$ and, 
  for all update operations $u \in E$
  and query operations $q \in E$, $u\rightarrow q$ or $q\rightarrow u$. Then $H\in SC(T)$.
\end{proposition}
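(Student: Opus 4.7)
The plan is to promote the total order $\le$ witnessing causal convergence into a sequentially consistent linearization. From $H\in CCv(T)$, extract a causal order $\rightarrow$ and a total order $\le\supset\rightarrow$ such that for every event $e$ there is a word $l''_e\cdot\Lambda(e)\in\lin((H^\le).\pi(\lfloor e\rfloor,\{e\}))\cap L(T)$. Because $\mapsto\subset\rightarrow\subset\le$, the unique linearization $l$ of $H^\le$ belongs to $\lin(H)$, so it suffices to prove $l\in L(T)$.

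I proceed by contradiction, reusing the shortest-prefix argument of the proof of Proposition~\ref{prop:causal:WCC_W|W}. Assume $l\notin L(T)$; by determinism of the transducer there is a shortest prefix $l'\cdot\Lambda(e)\notin L(T)$ with $l'\in L(T)$, for some $e\in E_H$. The goal is to compare the state reached after $l'$ with the state reached after $l''_e$ and show that the two coincide; since $l''_e\cdot\Lambda(e)\in L(T)$, this equality immediately contradicts the choice of $l'\cdot\Lambda(e)$.

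I distinguish two cases on $e$. If $e$ is a pure update, its output $\sigma_o$ equals $\lambda(q,\sigma_i)$ for every state $q$ (by the very definition of pure update), so $\Lambda(e)$ is admissible from every state and $l'\cdot\Lambda(e)\in L(T)$, a contradiction. If $e$ is a query, I invoke the hypothesis: every update $u\neq e$ satisfies $u\rightarrow e$ or $e\rightarrow u$, and since $\rightarrow\subset\le$, the set of updates strictly $\le$-below $e$ equals the set of updates in $\lfloor e\rfloor$, with identical $\le$-order in $l'$ and in $l''_e$. Because query events leave the transition component untouched, the state after $l'$ equals the state after $l''_e$, and the admissibility of $\Lambda(e)$ transfers, again contradicting $l'\cdot\Lambda(e)\notin L(T)$.

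The main obstacle is bridging the structural difference between $l'$ and $l''_e$: $l'$ carries every event strictly $\le$-below $e$ together with its output, whereas $l''_e$ lists only the causal predecessors of $e$, with outputs hidden. The case split resolves this: the pure-update subcase is trivial, and in the query subcase the update/query comparability hypothesis forces the update content of the two prefixes to coincide, while the residual queries present in $l'$ but absent from $l''_e$ are state-neutral and therefore harmless.
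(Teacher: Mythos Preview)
Your argument is essentially the paper's own proof: extract $\le$ from causal convergence, let $l$ be the unique linearization of $H^{\le}$, and derive a contradiction at the shortest bad prefix $l'\cdot\Lambda(e)$ by comparing the state after $l'$ with the state after the causal-convergence witness $l''_e$.

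One terminological slip to correct: you claim that ``query events leave the transition component untouched'' and that ``the residual queries present in $l'$ but absent from $l''_e$ are state-neutral''. In the paper's vocabulary a \emph{query} may simultaneously be an \emph{update} (e.g.\ $pop$), so queries are not state-neutral in general. What makes the argument work is that every event in $l'$ not already in $l''_e$ is a \emph{non-update}: since each update is $\rightarrow$-comparable to the query $e$ and $\rightarrow\subset\le$, any update $\le$-below $e$ already lies in $\lfloor e\rfloor$. Likewise your case split should read ``$e$ is not a query'' versus ``$e$ is a query'' rather than ``pure update'' versus ``query'', so as to cover operations that are neither updates nor queries. (The paper's proof commits the same conflation when it jumps from ``$e$ is not a pure update'' to ``$e$ is a query''.) With that wording fixed, your proof and the paper's coincide.
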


\begin{proof}
  Let $T$ be an ADT and $H = (\Sigma, E, \Lambda, \mapsto) \in WCC(T)$ such that, for all update operations $u \in E$
  and query operations $q \in E$, $u\rightarrow q$ or $q\rightarrow u$. As $H\in CCv(T)$, there exists a total order $\le$ 
  that contains $\rightarrow$ and, for all $e\in E$, a linearization 
  $l_e\cdot\Lambda(e) \in \lin((H^\le).\pi(\lfloor e\rfloor, \{e\})\cap L(T)$.

  Let $l$ be the unique linearization of $\lin(H^\le)$. As $\mapsto \subset \le$, $l\in \lin(H)$. 
  Suppose that $l\notin L(T)$. As in Proposition \ref{prop:causal:WCC_W|W},
  $l$ has a prefix $l'\cdot\Lambda(e) \notin L(T)$ with $l'\in L(T)$. 
  As the transition system of $T$ is complete, $e$ can not be a pure update.
  It means $e$ is a query operation, so it is not concurrent with an update operation. 
  As $\rightarrow\subset\le$, $e$ has the same updates in its causal past and in its 
  predecessors by $\le$, which means that $l_e$ and $l'$ are composed of the same 
  updates in the same order, so $l'$ and $l''$ lead to the same state. 
  Under these conditions, it is absurd that $l''\cdot\Lambda(e) \in L(T)$,
  $l' \in L(T)$ and $l'\cdot\Lambda(e) \notin L(T)$. It results that $l\in L(T)$, so $H\in SC(T)$.
\end{proof}

\section{Implementation in wait-free systems}\label{section:implementation}

In this section, we illustrate how causally consistent data structures can be implemented 
in a special kind of distributed systems: wait-free asynchronous message-passing distributed systems.
We first introduce our computing model, then we give an implementation of an array of $K$ 
window streams of size $k$ for causal consistency and causal convergence. 

\subsection{Wait-free asynchronous message-passing distributed systems}

A message-passing distributed system is composed of a known number $n$ of sequential processes
that communicate by sending and receiving messages. Processes are asynchronous.
This means that the processes execute each at its own pace, and there is no bound on the time between 
the sending and the reception of a message. Moreover, processes can crash. A process that crashes 
simply stops operating. A process that never crashes is said to be \emph{non-faulty}.

Communication is done by the mean of a reliable causal broadcast communication primitive \cite{HT93}. 
Such a communication primitive can be implemented on any system where eventually reliable point-to-point 
communication is possible. Processes can use two operations \emph{broadcast} and \emph{receive} with the 
following properties: 
\begin{itemize}
\item If a process receives a message $m$, then $m$ was broadcast by some process.
\item If a process receives a message $m$, then all non-faulty processes eventually receive $m$;
\item When a non-faulty process broadcasts a message, this message is immediately received locally at this process.
\item If a process broadcasts a message $m$ after receiving a message $m'$ then no process
receives $m$ before $m'$.
\end{itemize}

We make no assumption on the number of crashes that can occur during one execution.
In such a context a process cannot wait for the contribution of other processes 
without risking to remain blocked forever. Consequently, the execution speed of a 
process does not depend on other processes or underlying communication delays, hence 
the name "wait-free" we give to this system. Wait-free asynchronous message-passing 
distributed systems are a good abstraction of systems where the synchronisation of
is impossible (e.g. clouds where partitions can occur) or too costly 
(e.g. high performance parallel computing where synchronisation is a limitation to performances).

At the shared objects level, processes invoke operations on shared objects. These calls entail the execution of the algorithms corresponding to their message-passing implementation based on the causal reliable broadcast. An execution of a program using an object $T$ is represented (at the shared object level) by a concurrent history $(\Sigma, E, \Lambda, \mapsto)$ where $\Sigma$ is defined as in $L(T)$, $E$ is the set of all the calls to operations of $T$ during the whole execution, an event $e\in E$ is labelled by $\Lambda(e) = \sigma_i/\sigma_o$ if $\sigma_i$ is the input symbol of the operation called by $e$ and $\sigma_o$ is the return value. For $e, e'\in E$, $e\mapsto e'$ if $e$ happened before $e'$, on the same process.

\subsection{Implementation of causal consistency}

Algorithm given in Fig. \ref{algo:CC} shows an implementation of a causally consistent array of $K$ 
window streams of size $k$. The algorithm provides the user with two primitives,
$read(x)$, where $x<K$ is a stream identifier, that corresponds to a call to operation $r$
on the $x^{\text{th}}$ stream of the array, and $write(x, v)$, that corresponds to a call 
to operation $w(v)$ on the $x^{\text{th}}$ stream of the array.

Process $p_i$ maintains one variable $str_i$ that reflects the local state of the $K$ window streams.
When $p_i$ wants to read a stream, it simply returns the corresponding local state.
To write a value $v$ in a stream $x$, $p_i$ causally broadcasts a message composed of $x$ and $v$. 
Upon the reception of such a message, a process applies the writes locally by shifting the old values and 
inserting the new value at the end of the stream. 

Whenever a read or write operation is issued, it is completed without waiting for any other process. 
This corresponds to wait-free executions in shared memory distributed systems and implies fault-tolerance.

\begin{algorithm}
  \SetKw{Var}{var}
  \SetKw{Fun}{fun}
  \SetKw{Receive}{on receive}
  \SetKw{Broadcast}{causal\_broadcast}

  \SetKwData{Kwstate}{str}
  \SetKwData{Kwx}{x}
  \SetKwData{Kwv}{v}

  \SetKwFunction{KwPid}{pid}
  
  \SubAlgo{\textbf{object} $CC(\mathscr{W}_k^K)$}{

    \Var $\Kwstate_i \in \mathbb{N}^{K\times k} \leftarrow [[0,...,0],...,[0,...,0]]$\;

    \SetKwFunction{Kwread}{read}
    \SubAlgo{\Fun \Kwread $(\Kwx \in [0, K[) \in \mathbb{N}^k$}{
      \Return $\Kwstate_i[\Kwx]$\;
    }
    \SetKwFunction{Kwwrite}{write}
    \SetKwFunction{KwMessageA}{Mess}
    \SubAlgo{\Fun \Kwwrite $(\Kwx \in [0, K[, \Kwv \in \mathbb{N})$}{
      \Broadcast \KwMessageA $(\Kwx, \Kwv)$\label{algo:CC:brodcast}\;
    }
    \SubAlgo{\Receive \KwMessageA $(\Kwx \in [0, K[, \Kwv \in \mathbb{N})$}{
      \For{$y \in [0, k-2]$}{
        $\Kwstate_i[\Kwx][y] \leftarrow \Kwstate_i[\Kwx][y+1]$\;
      }
      $\Kwstate_i[\Kwx][k-1] \leftarrow \Kwv$\;
    }
  }
  \caption{Implementation of causal consistency for an \\ array of $K$ window streams of size $k$ (code for $p_i$)}
  \label{algo:CC}
\end{algorithm}

\begin{proposition}\label{prop:causal:impl_CC}
  All histories admitted by the algorithm of Fig. \ref{algo:CC} are causally consistent for the array of $K$ window streams of size $k$.
\end{proposition}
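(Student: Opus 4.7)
The plan is to exhibit an explicit causal order $\rightarrow$ on $E_H$ and, for every process $p\in \mathscr{P}_H$ and every event $e\in p$, a linearization of $(H^\rightarrow).\pi(\lfloor e\rfloor, p)$ lying in $L(\mathscr{W}_k^K)$, thereby satisfying Def.~\ref{def:causal_consistency}.

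First I would take $\rightarrow$ to be the transitive closure of $\mapsto$ together with the relation $w \prec f$ declaring that a write event $w$ precedes any event $f$ at another process whose local step is executed strictly after the reception of $w$'s broadcast message. The guarantees of the causal broadcast primitive ensure that this yields a partial order — no cycle can close through messages — and its reliable-delivery clause gives cofiniteness: every write issued by a non-faulty process is eventually received by all non-faulty processes, so it is causally below all but finitely many events of the history. Events on faulty processes (only finitely many per non-faulty receiver) and any infinite tail of reads produced by a process that has stopped writing can be placed arbitrarily late by a harmless total extension of $\rightarrow$ to enforce cofiniteness without breaking antisymmetry.

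Next, fix $p\in \mathscr{P}_H$ and $e\in p$, and linearize $\lfloor e\rfloor$ by any total extension of the union of the per-process program orders with $p$'s reception order on the write events lying in $\lfloor e\rfloor$; the FIFO-with-respect-to-causality property of causal broadcast makes this union acyclic. I would then verify that the resulting word $l$ lies in $L(\mathscr{W}_k^K)$: each write carries the constant output $\bot$ required by $\lambda$, remote reads have their output hidden by $\pi(\cdot, p)$ and are therefore trivially admissible, and for a local read $r\in p$ on stream $x$ the algorithm returns the local slot $str_i[x]$ at the time of $r$. A short induction over $p$'s timeline shows that this slot stores the tuple of the last $k$ values written to stream $x$ in $p$'s reception order, zero-padded from the left; since the writes on $x$ appearing in $l$ before $r$ are precisely those received by $p$ before $r$, in the same order, the returned value matches the state prescribed by Def.~\ref{def:window_reg}. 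The main obstacle I anticipate is the acyclicity verification for the combined relation used in the linearization, which reduces to the causal-broadcast axiom that no process receives $m$ before $m'$ when $m'$ was delivered to $m$'s sender before $m$ was broadcast.
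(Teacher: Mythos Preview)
Your overall strategy coincides with the paper's: derive the causal order from the delivery relation of the causal broadcast, and obtain the per-process linearization from $p$'s reception order on writes together with program order. The paper packages both steps into a single device, a timestamp $t_e^{q}$ (the later of $e$'s termination and the moment $q$ has received every message broadcast before $e$), which yields $\rightarrow$ uniformly on all events and, for each process $p$, a total order $\le_p \supseteq \rightarrow$; the required linearization is then just the $\le_p$-order restricted to $\lfloor e\rfloor$.

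There is, however, a genuine gap in your linearization step. The word you produce must lie in $\lin((H^{\rightarrow}).\pi(\lfloor e\rfloor, p))$, hence it must respect the full relation $\rightarrow$, not merely ``per-process program orders plus $p$'s reception order on the write events''. Your $\rightarrow$ contains edges $w \prec f$ for \emph{every} event $f$ (including remote reads) executed after the local delivery of $w$'s message, and these edges are not captured by the relation you extend. Concretely, if $w$ at $p_1$ is delivered at $p_2$ before a read $r$ at $p_2$, you have $w\rightarrow r$, yet your ``any total extension of program orders $+$ $p$'s reception order on writes'' may place $r$ before $w$ when $p\neq p_2$, so the resulting word is not a linearization of $H^{\rightarrow}$. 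The fix is to extend $\rightarrow$ itself together with $p$'s reception order on writes (and to argue, via the causal-broadcast axiom, that this union is acyclic --- which is exactly the point you flag as the ``main obstacle''). The paper's $\le_p$ does precisely this in one stroke.

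A second, smaller imprecision is your cofiniteness patch. You propose to enforce cofiniteness by a ``harmless total extension of $\rightarrow$'' after the fact, but any edges you add become constraints that the linearizations must also respect; you cannot first build the linearizations and then enlarge $\rightarrow$. It is cleaner to bake the needed edges into $\rightarrow$ from the start (as the paper does via $t_e^{q}$, which is defined for reads as well as writes), and only then construct the linearizations.
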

\begin{proof}
  Let $H=(\Sigma, E, \Lambda, \mapsto)$ be a history admitted by algorithm of Fig. \ref{algo:CC}. 
  For two processes $p$ and $p'$ and an event $e\in E$ invoked by $p'$, 
  we define the time $t_e^p$ as the maximum between the termination time of $e$ and the
  moment when $p$ has received all the messages sent by $p'$ before the termination of $e$.
  We use it to define two kinds of relations: 
  \begin{itemize}
  \item A causal order $\rightarrow$ by, for any two events $e, e'\in E$ invoked by processes 
    $p$ and $p'$ respectively, $e\rightarrow e'$ if $e=e'$ or $t_e^{p'}$ is a time before the beginning 
    of $e'$;
  \item For each process $p$, a total order odrer $\le_p$ by, for all events $e, e'\in E$, 
    $e\le_p e'$ if $t_e^{p}\le t_{e'}^{p}$.
  \end{itemize}
  The causal order $\rightarrow$ is reflexive by definition, antisymmetric because it is contained into the interval order defined by real-time
  and transitive because the broadcast is causal. As the messages are received instantly by their transmitter, $\rightarrow$
  contains $\mapsto$, and as messages are eventually received by all non-faulty processes, $\rightarrow$ is a causal order. 
  The relation $\le_p$ is a total order that contains $\rightarrow$.

  Let $p \in \mathscr{P}_H$, $e\in p$ and $l_p$ the unique linearization of $(H^{\le_p}).\pi(E, p)$. As $str_i$ is
  only modified when a message is received and the only reads we consider are those of $p$, $l_p\in L(\mathscr{W}_k^K)$. 
  For all $e\in p$, the prefix $l_e$ of $l_p$ until $e$ is in $\lin((H^\rightarrow).\pi(\lfloor e\rfloor, p))\cap L(\mathscr{W}_k^K)$,
  so $H\in CC(\mathscr{W}_k^K)$. 
\end{proof}

For shared memory, it is well-known \cite{gambhire2000reducing} 
that causal reception implements a little more than causality. The same thing happens for other kinds of objects. 
For example, the history on Fig. \ref{fig:SCC_pas_SC} presents an example of false causality: 
this history is causally consistent but is not admitted by the algorithm of Fig. \ref{algo:CC}. Indeed, at least 
one of the messages sent during the execution of each of the events $w_x(1)$ and $w_x(2)$ 
must be received by one of the processes after the second write has 
been enforced locally -- otherwise each of the events would precede the other in the happened-before relation.
As a result, it is not possible that both processes read the value they proposed before the other value.
Actually, the algorithm of Fig. \ref{algo:CC} ensures a slightly stronger property: for each process $p$, the linearizations 
required by causal consistency for successive events are prefix one from another. 

\subsection{Implementation of causal convergence}

Eventual consistency received such interest because it can be wait-free implemented independently 
from communication delays and moreover, it allows processes to share \textit{in fine} the same final state. 
Causal convergence can thus be seen as an improvement of eventual consistency as it ensures 
stronger consistency properties still wait-free implementable and hence weaker that sequential consistency.

The algorithm given in Fig. \ref{algo:CCv} shows an implementation of a causally convergent array of $K$ 
window streams of size $k$. The algorithm provides the user with the same interface as the algorithm of Fig.  \ref{algo:CC}:
a primitive $read(x)$ that corresponds to an operation $r$ on the $x^{\text{th}}$ stream and
a primitive $write(x, v)$, that corresponds to an operation $w(v)$ on the $x^{\text{th}}$ stream of the array.

The principle is to build a total order on the write operations on which all the participants agree, 
and to sort the corresponding values in the local state of each process with respect to this total order. 
In the algorithm of Fig. \ref{algo:CCv}, this order is built from a Lamport's clock \cite{lamport1978time} that contains 
the happened-before precedence relation, and thus is compatible with the causal order we build on top of this relation. 
A logical Lamport's clock is a pre-total order as some events may be associated with the same logical time. 
In order to have a total order, the writes are timestamped with a pair composed of the logical time and the 
id of the process that produced it (process ids are assumed unique and totally ordered). 

Process $p_i$ maintains two variables: a Lamport clock $vtime_i$ and an array $str_i$. 
Each cell of $str_i$ is an array of size $k$ that encodes the $k$ values of a window stream as
structures $(v, (vt, j))$ where $v$ is the value itself, and $(vt, j)$ is the timestamp of the 
write event that proposed it. Timestamps can be compared : $(vt, j) < (vt', j')$ if $vt < vt'$
or $vt = vt'$ and $j < j'$. With zeach write operation is associated a virtual time greater than $1$, so the 
timestamps $(0,0)$, present in the initial value of $str_i$, are smaller than the timestamps 
of all the writes. 

When $p_i$ wants to read a stream, it removes the timestamps from the corresponding local state.
To write a value $v$ in a stream $x$, $p_i$ causally broadcasts a message composed of $x$ and $v$ and a new timestamp $(vt, i)$. 
At reception of such a message, it increments its variable $vtime_i$ to implement the logical time, and
it inserts the new value at its correct location in the corresponding stream. 

\begin{algorithm}
  \SetKw{Var}{var}
  \SetKw{Fun}{fun}
  \SetKw{Receive}{on receive}
  \SetKw{Broadcast}{causal\_broadcast}

  \SetKwData{Kwclock}{vtime}
  \SetKwData{Kwstate}{str}
  \SetKwData{Kwx}{x}
  \SetKwData{Kwv}{v}
  \SetKwData{Kwcl}{vt}
  \SetKwData{Kwj}{j}

  \SetKwFunction{KwPid}{pid}
  
  \SubAlgo{\textbf{object} $CCv(\mathscr{W}_k^K)$}{

    \Var $\Kwstate_i \in \mathbb{N}^{K\times k \times (1+2)} \leftarrow [[[0,(0,0)],...],...]$\; 
    \Var $\Kwclock_i \in \mathbb{N} \leftarrow 0$\; 

    \SetKwFunction{Kwread}{read}
    \SubAlgo{\Fun \Kwread $(\Kwx \in [0, K[) \in \mathbb{N}^k$}{
      \Return $[\Kwstate_i[\Kwx][0][0], ..., \Kwstate_i[\Kwx][k-1][0]]$\;
    }
    \SetKwFunction{Kwwrite}{write}
    \SetKwFunction{KwMessageA}{Mess}
    \SubAlgo{\Fun \Kwwrite $(\Kwx \in [0, K[, \Kwv \in \mathbb{N})$}{
      \Broadcast \KwMessageA $(\Kwx, \Kwv, \Kwclock + 1, i)$  \label{algo:CCv:bc}\;
    }
    \SubAlgo{\Receive \KwMessageA $(\Kwx \in [0, K[, \Kwv \in \mathbb{N}, \Kwcl\in \mathbb{N}, \Kwj\in\mathbb{N})$}{
      $\Kwclock_i\leftarrow \max(\Kwclock_i, \Kwcl)$\label{algo:CCv:max}\;
      \Var $y\in\mathbb{N} \leftarrow 0$\;
      \While{$y < k-1 \land \Kwstate_i[\Kwx][y][1] \le (\Kwcl,\Kwj)$}{
        $\Kwstate_i[\Kwx][y] \leftarrow \Kwstate_i[\Kwx][y+1]$\;
        $y \leftarrow y+1$\;
      }
      \If{$y \neq 0$}{
        $\Kwstate_i[\Kwx][y-1] \leftarrow \Kwv$
      }
    }
  }
  \caption{Implementation of causal convergence for an array of $K$ window streams of size $k$ (code for $p_i$)}
  \label{algo:CCv}
\end{algorithm}

\begin{proposition}\label{prop:causal:impl_CCv}
  All histories admitted by Algo. of Fig. \ref{algo:CCv} are causally convergent for the array of $K$ objects $\mathscr{W}_k$.
\end{proposition}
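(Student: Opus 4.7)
The plan is to follow the blueprint of the proof of Proposition \ref{prop:causal:impl_CC}, replacing the per-process total orders $\le_p$ with a single global total order $\le$ built from the Lamport timestamps attached to the broadcast messages.

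First, I would introduce the causal order $\rightarrow$ exactly as in Proposition \ref{prop:causal:impl_CC}, using the message reception times $t_e^p$: because the communication primitive is causal reliable broadcast, $\rightarrow$ is again a causal order containing $\mapsto$. I would then define $\le$ so that write events are ordered by their timestamps $(vt, j)$, which are unique since each process broadcasts only with its own identifier in the tie-breaking component, and extend $\le$ to any linear extension of $\rightarrow$ that agrees with the timestamp order on writes. The compatibility $\rightarrow\ \subseteq\ \le$ on writes is where the Lamport-clock property is used: if $e \rightarrow e'$ for two writes issued by processes $p$ and $p'$, then $p'$ has received $e$'s message and executed line \ref{algo:CCv:max} before broadcasting $e'$ on line \ref{algo:CCv:bc}, so the timestamp of $e'$ is strictly greater than that of $e$ in the lexicographic order.

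Next, I would state and prove the central invariant: at every instant of process $p_i$ and for each stream $x$, the array $\mathit{str}_i[x]$ encodes the result of applying to the initial window stream, in $\le$-order, all writes on $x$ received so far by $p_i$ (after truncating to the $k$ most recent such writes). Concretely, the cells of $\mathit{str}_i[x]$ contain, in increasing timestamp order, the $k$ largest timestamped values received so far, padded with the sentinel $(0,(0,0))$ when fewer than $k$ writes have been received. This invariant would be proved by induction on the number of messages handled by $p_i$: the base case is clear from the initialisation, and the inductive step reduces to checking that the while-loop of the reception handler performs an insertion into a sorted bounded-size array, evicting exactly the smallest-timestamp entry when the incoming timestamp is larger. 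This loop analysis is the main technical step; the rest is bookkeeping.

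Finally, using the invariant, I would construct for each event $e$ executed by $p_i$ a linearization witnessing causal convergence. Take $l_e$ to be the unique $\le$-ordering of $\lfloor e\rfloor$, hiding the outputs of every event except $e$ itself (legitimate because writes are pure updates in $\mathscr{W}_k$). By the invariant, the state produced on stream $x$ by replaying the writes of $l_e$ in order equals the value of $\mathit{str}_i[x]$ at the moment $e$ is executed locally; if $e$ is a read on $x$, that value matches its return, and if $e$ is a write, its $\bot$ output matches trivially. Hence $l_e \in \lin((H^{\le}).\pi(\lfloor e\rfloor,\{e\})) \cap L(\mathscr{W}_k^K)$, establishing causal convergence. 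The main obstacle is the rigorous loop-invariant argument justifying that out-of-order receptions still leave the array in the correct sorted, truncated form; the rest is a straightforward adaptation of Proposition \ref{prop:causal:impl_CC}.
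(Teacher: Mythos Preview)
Your proposal is correct and follows essentially the same approach as the paper: define the causal order as in Proposition~\ref{prop:causal:impl_CC}, order the writes by their Lamport timestamps, use lines~\ref{algo:CCv:bc} and~\ref{algo:CCv:max} to argue that $\rightarrow$ on writes is contained in the timestamp order so the union extends to a total order $\le$, and then check that each $l_e$ lies in $L(\mathscr{W}_k^K)$. The only difference is that you spell out as an explicit inductive invariant what the paper compresses into the phrase ``by construction at the reception of the messages''; this added rigor is welcome but not a different route.
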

\begin{proof}
  Let $H=(\Sigma, E, \Lambda, \mapsto)$ be a history admitted by the algoritm of Fig. \ref{algo:CCv}. 
  We define the causal order as in Proposition \ref{prop:causal:impl_CC} and a total order $\le_w$
  on the write operations by, for all writes $e_i, e_j$ invoked by $p_i$ and $p_j$ when their 
  variable $vtime$ is equal to $vt_i$ and $vt_j$, $e\le e'$ if $(vt_i, i) \le (vt_j,j)$. 
  Let us remark that, thanks to lines \ref{algo:CCv:bc} and \ref{algo:CCv:max}, 
  if $e\rightarrow e'$, then $e\le_w e'$, which means $\rightarrow\cup \le_w$ is a 
  partial order on $E$ that can be extended into a total order $\le$.
  Let $e \in E$ and $l_e$ be the unique linearization of $(H^\le).\pi(\lfloor e\rfloor, \{e\})$. If $e$ is a write,
  $l_e$ contains no return value, so $l_e\in L(T)$. Otherwise, $\Lambda(e) = read(x)/[v_0, ..., v_{k-1}]$, where,
  by construction at the reception of the messages, $v_0, ..., v_{k-1}$ are the $k$ newest values written on $x$ 
  with respect to the order $\le$, which means $l_e\in L(T)$. Finally, $H\in CCv(\mathscr{W}_k^K)$. 
\end{proof}

\section{Conclusion}\label{section:conclusion}

Sharing objects is essential to abstract communication complexity 
in large scale distributed systems. Until now, a lot of work has been done
to specify many kinds of shared memory, but they cannot always be easily 
extended to other abstract data types. In this paper, we extend causal 
consistency to all abstract data types. We also explore the variations
of causal consistency, around three consistency criteria.

Each of these three consistency criteria is pertinent. Weak causal consistency can be seen as the causal common denominator of the two branches of weak consistency criteria (eventual consistency and pipelined consistency). Indeed, it can be combined with any of them in wait-free distributed systems. Causal convergence is the result of the integration on weak causal consistency in the eventual consistency branch. Finally, causal consistency is from one side the generalization of the consistency criterion of causal memory to any abstract data type and on the other side it covers both weak causal consistency and pipelined consistency. 

To sum up, this paper allows to better understand the connections between weak consistency criteria. The two criteria to keep in mind are causal convergence and causal consistency as representatives of the two irreconcilable branches of consistency in wait-free distributed systems (sequential consistency and linearizability cannot be implemented in such a context). 

\section*{Acknowledgment}\label{section:ack}
\addcontentsline{toc}{section}{Acknowledgment} 

This work has been partially supported by the  French ANR project Socioplug (ANR-13-INFR-0003),
which is devoted to large scale distributed programming, and the Franco-German ANR project DISCMAT 
devoted to connections between mathematics  and distributed computing.

\bibliographystyle{plain}
\bibliography{ppopp}

\end{document}